\newenvironment{proof}[1][Proof]{\noindent\textbf{#1.} }{\ \rule{0.5em}{0.5em}}
\begin{document}

\title{\textbf{On divergences tests for composite hypotheses under composite
likelihood}}
\author{N. Mart\'{\i}n$^{1}$, L. Pardo$^{2}$ K. Zografos$^{3}$ \\
\\
$^{1}${\small Department of Statistics and O.R. II, Complutense University
of Madrid, 28003 Madrid, Spain}\\
$^{2}${\small Department of Statistics and O.R. I, Complutense University of
Madrid, 28040 Madrid, Spain}\\
$^{3}${\small Department of Mathematics, University of Ioannina, 45110
Ioannina, Greece}}
\date{}
\maketitle

\begin{abstract}
It is well-known that in some situations it is not easy to compute the
likelihood function as the datasets might be large or the model is too
complex. In that contexts composite likelihood, derived by multiplying the
likelihoods of subjects of the variables, may be useful. The extension of
the classical likelihood ratio test statistics to the framework of composite
likelihoods is used as a procedure to solve the problem of testing in the
context of composite likelihood. In this paper we introduce and study a new
family of test statistics for composite likelihood: Composite $\phi $%
-divergence test statistics for solving the problem of testing a simple null
hypothesis or a composite null hypothesis. To do that we introduce and study
the asymptotic distribution of the restricted maximum composite likelihood
estimate.
\end{abstract}

\noindent \underline{\textbf{AMS 2001 Subject Classification}}\textbf{: }

\noindent \underline{\textbf{Keywords and phrases}}: Composite likelihoods,
maximum composite likelihood estimator, restricted maximum composite
likelihood estimator, composite likelihood $\phi $-divergence
test-statistics.

\noindent

\section{Introduction\label{Sec1}}

Hypothesis testing is a cornerstone of mathematical statistics and,
subsequently, the theory of log-likelihood ratio tests is a cornerstone in
the theory of testing statistical hypotheses, too. On the other hand,
maximum likelihood estimators play a key role in the development of
log-likelihood ratio tests. Albeit maximum likelihood estimators can be
easily obtained and they obey nice large sample properties, there are cases,
like the case of complicated probabilistic models where the maximum
likelihood estimators do not exist or they can not be obtained. In such a
case the problem is usually overcomed by the use of pseudo-likelihood
functions and the respective estimators which result by maximization of such
a function. Composite likelihood and the respective composite likelihood
estimators are an appealing case of pseudo-likelihood estimators. There is
an extensive literature composite likelihood methods in Statistics. The
history of the composite likelihood may be traced back to the
pseudo-likelihood approach of Besag (1974) for modeling spatial data. The
name of composite likelihood was given by Lindsay (1988) to refer a
likelihood type object formed by multiplying together individual component
likelihoods, each of which corresponds to a marginal or conditional event.
Composite likelihood finds applications in a variety of fields, including
genetics, spatial statistics, longitudinal analysis, multivariate modeling,
to mention a few. The special issue, with guest editors Reid, Lindsay and
Liang (2011), of the journal \textit{Statistica Sinica} is devoted to
composite likelihood methods and applications in several fields and it,
moreover, provides with an exhaustive and updated source of knowledge in the
subject. The recent papers by Reid (2013) and Cattelan and Sartori (2016)
concentrate on new developments on the composite likelihood inference.

Distance or divergence based on methods of estimation and testing are
fundamental tools and constitute a methodological part in the field of
statistical inference. The monograph by Kullback (1959) was probably the
starting point\ of usage of the so called divergence measure for testing
statistical hypotheses. The next important steps were the monographs by Read
and Cressie (1988), Vajda (1989), Pardo (2006) and Basu et al. (2011) where
distance, divergence or disparity methods were developed for estimation and
testing. Thousands of papers have been also published in this frame and many
of them have been exploited and mentioned in the above monographs. For
testing a statistical hypothesis in a parametric framework, a test-statistic
can be constructed by means of a distance or divergence measure between the
empirical model and the model which is specified by the null hypothesis. The
empirical model is the parametric model which governs the data with the
unknown parameters to be replaced by their maximum likelihood estimators.
The asymptotic normality of the maximum likelihood estimators is exploited
along with the well known delta method in order to reach the asymptotic
distribution of the respective divergence test-statistics.

The divergence test-statistics are based on considering the distance between
density functions, chosen in an appropriate way. In the statistical
situations in which we only have composite densities it seems completely
natural to define statistical procedures of testing based on divergence
measures but between the composite densities instead of the densities. This
paper is motivated by the necessity to develop divergence based on methods,
described above, for testing statistical hypotheses when the maximum
composite likelihood estimators are used instead of the classic maximum
likelihood estimators and we consider divergence measures between composite
density functions in order to get an appropriate test-statistic. In this
framework, the next section introduces the notation which will be used and
reviews composite likelihood estimators. Section \ref{Sec3} is devoted to
present a family of $\phi $-divergence test-statistics for testing simple
null hypothesis. The formulation of testing composite null hypotheses by
means of $\phi $-divergence type test-statistics is the subject of Section %
\ref{Sec5}. But in order to get the results in relation to the composite
null hypothesis it is necessary in Section \ref{Sec4} to introduce and study
the restricted maximum composite estimator as well its asymptotic
distribution and the relationship between the restricted and the
un-restricted maximum composite likelihood estimators. Section \ref{Sec6} is
devoted to present a numerical example and finally a simulation study is
carried out in Section \ref{Sec7}. The proofs of the main theoretic results
are provided in the Appendix.

\section{Composite likelihood and divergence tests\label{Sec2}}

We adopt here the notation by Joe et al. (2012) regarding composite
likelihood function and the respective maximum composite likelihood
estimators. In this regard, let $\{f(\cdot ;\boldsymbol{\theta }\mathbf{)},%
\boldsymbol{\theta }\in \Theta \subseteq 
\mathbb{R}
^{p},p\geq 1\}$ be a parametric identifiable family of distributions for an
observation $\boldsymbol{y}$, a realization of a random $m$-vector $%
\boldsymbol{Y}$. In this setting, the composite density based on $K$
different margins or conditional distributions has the form%
\begin{equation*}
\mathcal{CL}(\boldsymbol{\theta }\mathbf{,}\boldsymbol{y}\mathbf{)}%
=\tprod\limits_{k=1}^{K}f_{A_{k}}^{w_{k}}(y_{j},j\in A_{k};\boldsymbol{%
\theta })
\end{equation*}%
and the composite\ log-density based on $K$ different margins or conditional
distributions has the form%
\begin{equation*}
c\ell (\boldsymbol{\theta }\mathbf{,}\boldsymbol{y}\mathbf{)=}%
\dsum\limits_{k=1}^{K}w_{k}\ell _{A_{k}}(\boldsymbol{\theta }\mathbf{,}%
\boldsymbol{y}),
\end{equation*}%
with%
\begin{equation*}
\ell _{A_{k}}(\boldsymbol{\theta }\mathbf{,}\boldsymbol{y})=\log
f_{A_{k}}(y_{j},j\in A_{k};\boldsymbol{\theta }),
\end{equation*}%
where $\{A_{k}\}_{k=1}^{K}$ is a family of random variables\ associated
either with marginal or conditional distributions involving some $y_{j}$, $%
j\in \{1,...,m\}$ and $w_{k}$, $k=1,...,K$ are non-negative and known
weights. If the weights are all equal, then they can be ignored, actually
all the statistical procedures produce equivalent results.

Let also $\boldsymbol{y}_{1},...,\boldsymbol{y}_{n}$ be independent and
identically distributed replications of $\boldsymbol{y}$. We denote by 
\begin{equation*}
c\ell (\boldsymbol{\theta }\mathbf{,}\boldsymbol{y}_{1},...,\boldsymbol{y}%
_{n}\mathbf{)}=\dsum\limits_{i=1}^{n}c\ell (\boldsymbol{\theta }\mathbf{,}%
\boldsymbol{y}_{i}\mathbf{)}
\end{equation*}%
the composite log-likelihood function for the whole sample. In complete
accordance with the classic maximum likelihood estimator, the maximum
composite likelihood estimator $\widehat{\boldsymbol{\theta }}_{c}$ is
defined by%
\begin{equation*}
\widehat{\boldsymbol{\theta }}_{c}=\underset{\boldsymbol{\theta }\in \Theta }%
{\arg \max }\dsum\limits_{i=1}^{n}c\ell (\boldsymbol{\theta }\mathbf{,}%
\boldsymbol{y}_{i}\mathbf{)}=\underset{\boldsymbol{\theta }\in \Theta }{\arg
\max }\dsum\limits_{i=1}^{n}\dsum\limits_{k=1}^{K}w_{k}\ell _{A_{k}}(%
\boldsymbol{\theta }\mathbf{,}\boldsymbol{y}_{i}).
\end{equation*}%
It can be also obtained by the solution of the equation%
\begin{equation*}
\boldsymbol{u}(\boldsymbol{\theta }\mathbf{,}\boldsymbol{y}_{1}\mathbf{,...,}%
\boldsymbol{y}_{n}\mathbf{)}=\boldsymbol{0}_{p}\mathbf{,}
\end{equation*}%
where%
\begin{equation*}
\boldsymbol{u}(\boldsymbol{\theta }\mathbf{,}\boldsymbol{y}_{1}\mathbf{,...,}%
\boldsymbol{y}_{n}\mathbf{)}=\frac{\partial c\ell (\boldsymbol{\theta }%
\mathbf{,}\boldsymbol{y}_{1}\mathbf{,...,}\boldsymbol{y}_{n}\mathbf{)}}{%
\partial \boldsymbol{\theta }}=\dsum\limits_{i=1}^{n}\dsum%
\limits_{k=1}^{K}w_{k}\frac{\partial \ell _{A_{k}}(\boldsymbol{\theta }%
\mathbf{,}\boldsymbol{y})}{\partial \boldsymbol{\theta }},
\end{equation*}%
is the composite likelihood score function, that is the partial derivative
of the composite log-likelihood with respect to the parameter vector.

The maximum composite likelihood estimator $\widehat{\boldsymbol{\theta }}%
_{c}$ obeys asymptotic normality and in particular%
\begin{equation*}
\sqrt{n}(\widehat{\boldsymbol{\theta }}_{c}-\boldsymbol{\theta })\underset{%
n\rightarrow \infty }{\overset{\mathcal{L}}{\longrightarrow }}\mathcal{N}%
\left( \boldsymbol{0},\boldsymbol{G}_{\ast }^{-1}(\boldsymbol{\theta }%
\mathbf{)}\right) ,
\end{equation*}%
where $\boldsymbol{G}_{\ast }(\boldsymbol{\theta }\mathbf{)}$ denotes
Godambe information matrix, defined by%
\begin{equation*}
\boldsymbol{G}_{\ast }(\boldsymbol{\theta }\mathbf{)}=\boldsymbol{H}\mathbf{(%
}\boldsymbol{\theta }\mathbf{)}\boldsymbol{J}^{-1}\mathbf{(}\boldsymbol{%
\theta }\mathbf{)}\boldsymbol{H}\mathbf{(}\boldsymbol{\theta }\mathbf{),}
\end{equation*}%
with $\boldsymbol{H}\mathbf{(}\boldsymbol{\theta }\mathbf{)}$ being the
sensitivity or Hessian matrix and $\boldsymbol{J}\mathbf{(}\boldsymbol{%
\theta }\mathbf{)}$ being the variability matrix, defined, respectively, by%
\begin{align*}
\boldsymbol{H}\mathbf{(}\boldsymbol{\theta }\mathbf{)}& =E_{\boldsymbol{%
\theta }}[-\tfrac{\partial }{\partial \boldsymbol{\theta }}\boldsymbol{u}%
^{T}(\boldsymbol{\theta }\mathbf{,}\boldsymbol{Y}\mathbf{)]}, \\
\boldsymbol{J}\mathbf{(}\boldsymbol{\theta }\mathbf{)}& =Var_{\boldsymbol{%
\theta }}[\boldsymbol{u}(\boldsymbol{\theta }\mathbf{,}\boldsymbol{Y}\mathbf{%
)}]=E_{\boldsymbol{\theta }}[\boldsymbol{u}(\boldsymbol{\theta }\mathbf{,}%
\boldsymbol{Y}\mathbf{)}\boldsymbol{u}^{T}(\boldsymbol{\theta }\mathbf{,}%
\boldsymbol{Y}\mathbf{)}],
\end{align*}%
where the superscript $T$ denotes the transpose of a vector or a matrix.

The matrices $\boldsymbol{H}\mathbf{(}\boldsymbol{\theta }\mathbf{)}$ and $%
\boldsymbol{J}\mathbf{(}\boldsymbol{\theta }\mathbf{)}$ are, by definition,
nonegative definite matrices but throughout this paper both, $\boldsymbol{H}%
\mathbf{(}\boldsymbol{\theta }\mathbf{)}$ and $\boldsymbol{J}\mathbf{(}%
\boldsymbol{\theta }\mathbf{)}$, are assumed to be positive definite
matrices. Since the component score functions can be correlated, we have $%
\boldsymbol{H}\mathbf{(}\boldsymbol{\theta }\mathbf{)}\neq \boldsymbol{J}%
\mathbf{(}\boldsymbol{\theta }\mathbf{)}$. If $c\ell (\boldsymbol{\theta }%
\mathbf{,}\boldsymbol{y}\mathbf{)}$ is a true log-likelihood function then $%
\boldsymbol{H}\mathbf{(}\boldsymbol{\theta }\mathbf{)}=\boldsymbol{J}\mathbf{%
(}\boldsymbol{\theta }\mathbf{)}=\boldsymbol{I}_{F}\mathbf{(}\boldsymbol{%
\theta }\mathbf{)}$, being $\boldsymbol{I}_{F}\mathbf{(}\boldsymbol{\theta }%
\mathbf{)}$\ the Fisher information matrix of the model. Using multivariate
version of the Cauchy-Schwarz inequality we have that the matrix $%
\boldsymbol{G}_{\ast }(\boldsymbol{\theta }\mathbf{)}-\boldsymbol{I}_{F}%
\mathbf{(}\boldsymbol{\theta }\mathbf{)}$\ is non-negative definite, i.e.,
the full likelihood function is more efficient than any other composite
likelihood function (cf. Lindsay, 1988, Lemma 4A).

For two densities $p$ and $q$ associated with two $m$-dimensional random
variables respectively, Csisz\'{a}r's $\phi $-divergence between $p$ and $q$
is defined by%
\begin{equation*}
D_{\phi }(p,q)=\int_{%
\mathbb{R}
^{m}}q(\boldsymbol{y})\phi \left( \frac{p(\boldsymbol{y})}{q(\boldsymbol{y})}%
\right) d\boldsymbol{y},
\end{equation*}%
where $\phi $ is a real valued convex function, satisfying appropriate
conditions which ensure the existence of the above integral (cf., Csisz\'{a}%
r, 1963, 1967, Ali and Silvey 1963, and Pardo, 2006. Csisz\'{a}r's $\phi $%
-divergence has been axiomatically characterized and studied extensively by
Liese and Vajda (1987, 2006), Vajda (1989), and Stummer and Vajda (2010),
among many others. Particular choices of the convex functions $\phi $, lead
to important measures of divergence including\ Kullback and Leibler (1951)
divergence, R\'{e}nyi (1960) divergence and Cressie and Read (1984) $\lambda 
$-power divergence, to mention a few. Csisz\'{a}r's $\phi $-divergence can
be extended and used in testing hypotheses on more than two distributions
(cf. Zografos (1998) and references appeared therein).

In this paper we are going to consider $\phi $-divergence measures between
the composite densities $\mathcal{CL}(\boldsymbol{\theta }_{1}\mathbf{,}%
\boldsymbol{y}\mathbf{)}$ and $\mathcal{CL}(\boldsymbol{\theta }_{2}\mathbf{,%
}\boldsymbol{y}\mathbf{)}$ in order to solve different problems of testing
hypotheses. The $\phi $-divergence measure between composite densities $%
\mathcal{CL}(\boldsymbol{\theta }_{1}\mathbf{,}\boldsymbol{y}\mathbf{)}$ and 
$\mathcal{CL}(\boldsymbol{\theta }_{2}\mathbf{,}\boldsymbol{y}\mathbf{)}$
will be defined by 
\begin{equation}
D_{\phi }(\boldsymbol{\theta }_{1}\mathbf{,}\boldsymbol{\theta }_{2})=\dint_{%
\mathbb{R}
^{m}}\mathcal{CL}(\boldsymbol{\theta }_{2}\mathbf{,}\boldsymbol{y}\mathbf{)}%
\phi \left( \frac{\mathcal{CL}(\boldsymbol{\theta }_{1}\mathbf{,}\boldsymbol{%
y}\mathbf{)}}{\mathcal{CL}(\boldsymbol{\theta }_{2}\mathbf{,}\boldsymbol{y}%
\mathbf{)}}\right) d\boldsymbol{y},  \label{1}
\end{equation}%
$\phi \in $ $\Psi $, with%
\begin{equation*}
\Psi =\{\phi :\phi \text{ is strictly convex, }\phi (1)=\phi ^{\prime
}(1)=0,0\phi \left( \tfrac{0}{0}\right) =0,0\phi \left( \tfrac{u}{0}\right)
=u\underset{v\rightarrow \infty }{\lim }\tfrac{\phi (v)}{v}\}\text{.}
\end{equation*}

An important particular case is the Kullback-Leibler divergence measure
obtained from (\ref{1}) with $\phi (x)=x\log x-x+1$, i.e.%
\begin{equation*}
D_{Kullback}(\boldsymbol{\theta }_{1}\mathbf{,}\boldsymbol{\theta }%
_{2})=\dint_{%
\mathbb{R}
^{m}}\mathcal{CL}(\boldsymbol{\theta }_{1}\mathbf{,}\boldsymbol{y}\mathbf{)}%
\log \frac{\mathcal{CL}(\boldsymbol{\theta }_{1}\mathbf{,}\boldsymbol{y}%
\mathbf{)}}{\mathcal{CL}(\boldsymbol{\theta }_{2}\mathbf{,}\boldsymbol{y}%
\mathbf{)}}d\boldsymbol{y}\mathbf{.}
\end{equation*}

Based on (\ref{1}) we shall present in this paper some new test-statistics
for testing simple null hypothesis as well as composite null hypothesis. To
the best of our knowledge, it is the first time that $\phi $-divergences are
used for solving testing problems in the context of composite likelihood.
However, the Kullback-Leibler divergence has been used, in the context of
composite likelihood, by many authors in model selection, see for instance
Varin (2008).

\section{Hypothesis testing: Simple null hypothesis\label{Sec3}}

In this section we are interested in testing 
\begin{equation}
H_{0}:\boldsymbol{\theta }=\boldsymbol{\theta }_{0}\text{ versus }H_{1}:%
\boldsymbol{\theta }\neq \boldsymbol{\theta }_{0}.  \label{H1}
\end{equation}%
If we consider the $\phi $-divergence between the composite densities $%
\mathcal{CL}(\widehat{\boldsymbol{\theta }}_{c}\mathbf{,}\boldsymbol{y}%
\mathbf{)}$ and $\mathcal{CL}(\boldsymbol{\theta }_{0}\mathbf{,}\boldsymbol{y%
}\mathbf{),}$%
\begin{equation*}
D_{\phi }(\widehat{\boldsymbol{\theta }}_{c}\mathbf{,}\boldsymbol{\theta }%
_{0})=\dint_{%
\mathbb{R}
^{m}}\mathcal{CL}(\boldsymbol{\theta }_{0}\mathbf{,}\boldsymbol{y}\mathbf{)}%
\phi \left( \frac{\mathcal{CL}(\widehat{\boldsymbol{\theta }}_{c}\mathbf{,}%
\boldsymbol{y}\mathbf{)}}{\mathcal{CL}(\boldsymbol{\theta }_{0}\mathbf{,}%
\boldsymbol{y}\mathbf{)}}\right) d\boldsymbol{y}
\end{equation*}%
verifies $D_{\phi }(\widehat{\boldsymbol{\theta }}_{c}\mathbf{,}\boldsymbol{%
\theta }_{0})\geq 0$, and the equality holds if and only if $\mathcal{CL}(%
\widehat{\boldsymbol{\theta }}_{c}\mathbf{,}\boldsymbol{y}\mathbf{)}=%
\mathcal{CL}(\boldsymbol{\theta }_{0}\mathbf{,}\boldsymbol{y}\mathbf{)}$%
\textbf{.} Small values of $D_{\phi }(\widehat{\boldsymbol{\theta }}_{c}%
\mathbf{,}\boldsymbol{\theta }_{0})$ are in favour of $H_{0}:\boldsymbol{%
\theta }=\boldsymbol{\theta }_{0}$, while large values of $D_{\phi }(%
\widehat{\boldsymbol{\theta }}_{c}\mathbf{,}\boldsymbol{\theta }_{0}),$
suggest rejection of $H_{0}$. This is due to the fact that large values of $%
D_{\phi }(\widehat{\boldsymbol{\theta }}_{c}\mathbf{,}\boldsymbol{\theta }%
_{0})$ suggest that the model $\mathcal{CL}(\widehat{\boldsymbol{\theta }}%
_{c}\mathbf{,}\boldsymbol{y}\mathbf{)}$ is not very close to $\mathcal{CL}(%
\boldsymbol{\theta }_{0}\mathbf{,}\boldsymbol{y}\mathbf{)}$. Therefore, $%
H_{0}$ is rejected if $D_{\phi }(\widehat{\boldsymbol{\theta }}_{c}\mathbf{,}%
\boldsymbol{\theta }_{0})>c$, where $c$ is specified so that the
significance level of the test is $\alpha .$ In order to obtain $c$, in the
next theorem we shall obtain the asymptotic distribution of%
\begin{equation}
T_{\phi ,n}(\widehat{\boldsymbol{\theta }}_{c}\mathbf{,}\boldsymbol{\theta }%
_{0})=\frac{2n}{\phi ^{\prime \prime }(1)}D_{\phi }(\widehat{\boldsymbol{%
\theta }}_{c}\mathbf{,}\boldsymbol{\theta }_{0}),  \label{T}
\end{equation}%
which we shall refer to as composite $\phi $-divergence test-statistics for
testing simple null hypothesis.

\begin{theorem}
\label{Theorem1}Under the null hypothesis $H_{0}:\boldsymbol{\theta }=%
\boldsymbol{\theta }_{0},$%
\begin{equation*}
T_{\phi ,n}(\widehat{\boldsymbol{\theta }}_{c}\mathbf{,}\boldsymbol{\theta }%
_{0})\overset{\mathcal{L}}{\underset{n\rightarrow \infty }{\longrightarrow }}%
\sum_{i=1}^{k}\lambda _{i}Z_{i}^{2},
\end{equation*}%
where $\lambda _{i}$, $i=1,...,k$, are the eigenvalues of the matrix $%
\boldsymbol{J}\mathbf{(}\boldsymbol{\theta }_{0}\mathbf{\mathbf{)}}%
\boldsymbol{G}_{\ast }^{-1}(\boldsymbol{\theta }_{0}\mathbf{)},$%
\begin{equation*}
k=\mathrm{rank}\left( \boldsymbol{J}\mathbf{(}\boldsymbol{\theta }_{0}%
\mathbf{\mathbf{)}}\right) ,
\end{equation*}%
and $Z_{1},...Z_{r}$ are independent standard normal random variables.
\end{theorem}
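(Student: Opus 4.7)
The natural strategy is to reduce the divergence statistic to a quadratic form in $\sqrt{n}(\widehat{\boldsymbol{\theta}}_c - \boldsymbol{\theta}_0)$ via a second-order Taylor expansion of $D_\phi$, and then to invoke the asymptotic normality of the maximum composite likelihood estimator recalled in Section \ref{Sec2}.

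First I would expand $D_\phi(\boldsymbol{\theta}, \boldsymbol{\theta}_0)$ about $\boldsymbol{\theta}=\boldsymbol{\theta}_0$. Differentiating under the integral once produces a factor $\phi^{\prime}(\mathcal{CL}(\boldsymbol{\theta},\boldsymbol{y})/\mathcal{CL}(\boldsymbol{\theta}_0,\boldsymbol{y}))$, which equals $\phi^{\prime}(1)=0$ at $\boldsymbol{\theta}_0$, so the gradient vanishes. For the second derivative the term proportional to $\phi^{\prime}$ again disappears, leaving only the contribution carrying $\phi^{\prime\prime}(1)$; the surviving integrand is the outer product of composite scores weighted by $\mathcal{CL}(\boldsymbol{\theta}_0,\boldsymbol{y})$, identified with $\phi^{\prime\prime}(1)\boldsymbol{J}(\boldsymbol{\theta}_0)$. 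Writing $\boldsymbol{V}_n = \sqrt{n}(\widehat{\boldsymbol{\theta}}_c - \boldsymbol{\theta}_0)$ and using the $\sqrt{n}$-consistency of $\widehat{\boldsymbol{\theta}}_c$ to absorb the Taylor remainder, one obtains
\[
T_{\phi,n}(\widehat{\boldsymbol{\theta}}_c, \boldsymbol{\theta}_0) = \boldsymbol{V}_n^T \boldsymbol{J}(\boldsymbol{\theta}_0) \boldsymbol{V}_n + o_p(1).
\]

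Combining this with the asymptotic normality $\boldsymbol{V}_n \overset{\mathcal{L}}{\longrightarrow} \boldsymbol{Z} \sim \mathcal{N}(\boldsymbol{0}, \boldsymbol{G}_*^{-1}(\boldsymbol{\theta}_0))$, together with Slutsky's lemma and the continuous mapping theorem, gives $T_{\phi,n} \overset{\mathcal{L}}{\longrightarrow} \boldsymbol{Z}^T \boldsymbol{J}(\boldsymbol{\theta}_0) \boldsymbol{Z}$. The standard spectral representation of a quadratic form in a multivariate normal vector then expresses this limit as $\sum_{i=1}^k \lambda_i Z_i^2$ with independent standard normal $Z_i$, the coefficients $\lambda_i$ being the nonzero eigenvalues of $\boldsymbol{J}(\boldsymbol{\theta}_0) \boldsymbol{G}_*^{-1}(\boldsymbol{\theta}_0)$, and $k = \mathrm{rank}(\boldsymbol{J}(\boldsymbol{\theta}_0))$ since $\boldsymbol{G}_*^{-1}(\boldsymbol{\theta}_0)$ is positive definite.

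The main technical obstacle is the Hessian calculation. Two points require care: justifying the interchange of differentiation and integration (standard dominated-convergence conditions on $\mathcal{CL}$ and smoothness of $\phi$ at $1$ suffice), and verifying that the resulting integral $\int_{\mathbb{R}^m} \mathcal{CL}(\boldsymbol{\theta}_0,\boldsymbol{y}) \boldsymbol{u}(\boldsymbol{\theta}_0,\boldsymbol{y}) \boldsymbol{u}^T(\boldsymbol{\theta}_0,\boldsymbol{y}) d\boldsymbol{y}$ indeed reproduces the variability matrix $\boldsymbol{J}(\boldsymbol{\theta}_0)$, a delicate point because $\mathcal{CL}(\boldsymbol{\theta}_0,\boldsymbol{y})$ is a weighted product of marginal or conditional densities and need not be a probability density on $\mathbb{R}^m$. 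Once these identifications are in hand, the Taylor remainder is $o_p(n^{-1})$, remains negligible after multiplication by $n$, and the argument closes.
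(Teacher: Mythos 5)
Your proposal is correct and follows essentially the same route as the paper's own proof: a second-order Taylor expansion of $D_{\phi}(\cdot,\boldsymbol{\theta}_0)$ at $\boldsymbol{\theta}_0$, with the gradient killed by $\phi'(1)=0$ and the Hessian identified as $\phi''(1)\boldsymbol{J}(\boldsymbol{\theta}_0)$, followed by the spectral representation of the limiting quadratic form in $\mathcal{N}(\boldsymbol{0},\boldsymbol{G}_{\ast}^{-1}(\boldsymbol{\theta}_0))$. If anything you are slightly more careful than the paper on two points it glosses over: the order of the Taylor remainder (you correctly state $o_p(n^{-1})$ where the paper writes $o(n^{-1/2})$) and the identification of $\int \mathcal{CL}(\boldsymbol{\theta}_0,\boldsymbol{y})\,\boldsymbol{u}\boldsymbol{u}^T d\boldsymbol{y}$ with $\boldsymbol{J}(\boldsymbol{\theta}_0)$, which indeed requires justification since $\mathcal{CL}(\boldsymbol{\theta}_0,\cdot)$ need not be the true density.
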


\begin{proof}
Under the standard regularity assumptions of asymptoitc statistics (cf.
Serfling, 1980, p. 144 and Pardo, 2006, p. 58), we have%
\begin{equation*}
\frac{\partial D_{\phi }(\boldsymbol{\theta }\mathbf{,}\boldsymbol{\theta }%
_{0})}{\partial \theta }=\dint_{%
\mathbb{R}
^{m}}\frac{\partial \mathcal{CL}(\boldsymbol{\theta }\mathbf{,}\boldsymbol{y}%
\mathbf{)}}{\partial \boldsymbol{\theta }}\phi ^{\prime }\left( \frac{%
\mathcal{CL}(\boldsymbol{\theta }\mathbf{,}\boldsymbol{y}\mathbf{)}}{%
\mathcal{CL}(\boldsymbol{\theta }_{0}\mathbf{,}\boldsymbol{y}\mathbf{)}}%
\right) d\boldsymbol{y},
\end{equation*}%
therefore 
\begin{equation*}
\left. \frac{\partial D_{\phi }(\boldsymbol{\theta }\mathbf{,}\boldsymbol{%
\theta }_{0})}{\partial \theta }\right\vert _{\boldsymbol{\theta }=%
\boldsymbol{\theta }_{0}}=\phi ^{\prime }\left( 1\right) \dint_{%
\mathbb{R}
^{m}}\left. \frac{\partial \mathcal{CL}(\boldsymbol{\theta }\mathbf{,}%
\boldsymbol{y}\mathbf{)}}{\partial \boldsymbol{\theta }}\right\vert _{%
\boldsymbol{\theta }=\boldsymbol{\theta }_{0}}d\boldsymbol{y}=\boldsymbol{0}%
_{p}.
\end{equation*}%
On the other hand,%
\begin{align*}
& \frac{\partial ^{2}D_{\phi }(\boldsymbol{\theta }\mathbf{,}\boldsymbol{%
\theta }_{0})}{\partial \boldsymbol{\theta }\partial \boldsymbol{\theta }^{T}%
} \\
& =\dint_{%
\mathbb{R}
^{m}}\frac{\partial ^{2}\mathcal{CL}(\boldsymbol{\theta }\mathbf{,}%
\boldsymbol{y}\mathbf{)}}{\partial \boldsymbol{\theta }\partial \boldsymbol{%
\theta }^{T}}\phi ^{\prime }\left( \frac{\mathcal{CL}(\boldsymbol{\theta }%
\mathbf{,}\boldsymbol{y}\mathbf{)}}{\mathcal{CL}(\boldsymbol{\theta }_{0}%
\mathbf{,}\boldsymbol{y}\mathbf{)}}\right) d\boldsymbol{y}\mathbf{+}\dint_{%
\mathbb{R}
^{m}}\frac{\partial \mathcal{CL}(\boldsymbol{\theta }\mathbf{,}\boldsymbol{y}%
\mathbf{)}}{\partial \boldsymbol{\theta }}\frac{\partial \mathcal{CL}(%
\boldsymbol{\theta }\mathbf{,}\boldsymbol{y}\mathbf{)}}{\partial \boldsymbol{%
\theta }^{T}}\frac{1}{\mathcal{CL}(\boldsymbol{\theta }_{0}\mathbf{,}%
\boldsymbol{y}\mathbf{)}}\phi ^{\prime \prime }\left( \frac{\mathcal{CL}(%
\boldsymbol{\theta }\mathbf{,}\boldsymbol{y}\mathbf{)}}{\mathcal{CL}(%
\boldsymbol{\theta }_{0}\mathbf{,}\boldsymbol{y}\mathbf{)}}\right) d%
\boldsymbol{y}
\end{align*}%
and 
\begin{equation*}
\left. \frac{\partial ^{2}D_{\phi }(\boldsymbol{\theta }\mathbf{,}%
\boldsymbol{\theta }_{0})}{\partial \boldsymbol{\theta }\partial \boldsymbol{%
\theta }^{T}}\right\vert _{\boldsymbol{\theta }=\boldsymbol{\theta }%
_{0}}=\phi ^{\prime \prime }\left( 1\right) \dint_{%
\mathbb{R}
^{m}}\left. \frac{\partial c\ell (\boldsymbol{\theta }\mathbf{,}\boldsymbol{y%
}\mathbf{)}}{\partial \boldsymbol{\theta }}\frac{\partial c\ell (\boldsymbol{%
\theta }\mathbf{,}\boldsymbol{y}\mathbf{)}}{\partial \boldsymbol{\theta }^{T}%
}\right\vert _{\boldsymbol{\theta }=\boldsymbol{\theta }_{0}}\mathcal{CL}(%
\boldsymbol{\theta }_{0}\mathbf{,}\boldsymbol{y}\mathbf{)}d\boldsymbol{y}%
=\phi ^{\prime \prime }\left( 1\right) \boldsymbol{J}\mathbf{(}\boldsymbol{%
\theta }_{0}\mathbf{\mathbf{).}}
\end{equation*}%
Then, from%
\begin{equation*}
D_{\phi }(\widehat{\boldsymbol{\theta }}_{c}\mathbf{,}\boldsymbol{\theta }%
_{0})=\frac{\phi ^{\prime \prime }\left( 1\right) }{2}(\widehat{\boldsymbol{%
\theta }}_{c}-\boldsymbol{\theta }_{0})^{T}\boldsymbol{J}\mathbf{(}%
\boldsymbol{\theta }_{0}\mathbf{\mathbf{)(}}\widehat{\boldsymbol{\theta }}%
_{c}-\boldsymbol{\theta }_{0})+o\left( n^{-1/2}\right)
\end{equation*}%
the desired result is obtained. The value of $k$ comes from 
\begin{equation*}
k=\mathrm{rank}\left( \boldsymbol{G}_{\ast }^{-1}(\boldsymbol{\theta }_{0}%
\mathbf{)}\boldsymbol{J}^{T}\mathbf{(}\boldsymbol{\theta }_{0}\mathbf{%
\mathbf{)}}\boldsymbol{G}_{\ast }^{-1}(\boldsymbol{\theta }_{0}\mathbf{)}%
\right) =\mathrm{rank}(\boldsymbol{J}\mathbf{(}\boldsymbol{\theta }_{0}%
\mathbf{\mathbf{))}}.
\end{equation*}
\end{proof}

\begin{remark}
Based on the previous Theorem we shall reject the null hypothesis $H_{0}:%
\boldsymbol{\theta }=\boldsymbol{\theta }_{0}$ if $T_{\phi ,n}(\widehat{%
\boldsymbol{\theta }}_{c}\mathbf{,}\boldsymbol{\theta }_{0})>c_{\alpha }$,
where $c_{\alpha }$ is the quantile of order $1-\alpha $ of the asymptotic
distribution of $T_{\phi ,n}(\widehat{\boldsymbol{\theta }}_{c}\mathbf{,}%
\boldsymbol{\theta }_{0})$ given in (\ref{T}). The value of $k$ is usually $%
p $, since the components of $\boldsymbol{\theta }$\ are assumed to be
non-redundant.
\end{remark}

In most cases, the power function of this testing procedure can not be
calculated explicitly. In the following theorem we present a useful
asymptotic result for approximating the power function.

\begin{theorem}
\label{Theorem2}Let $\boldsymbol{\theta }^{\ast }$ be the true parameter,
with $\boldsymbol{\theta }^{\ast }\neq \boldsymbol{\theta }_{0}$\textbf{.}
Then it holds%
\begin{equation*}
\sqrt{n}\left( D_{\phi }(\widehat{\boldsymbol{\theta }}_{c}\mathbf{,}%
\boldsymbol{\theta }_{0})-D_{\phi }(\boldsymbol{\theta }^{\ast }\mathbf{,}%
\boldsymbol{\theta }_{0})\right) \overset{\mathcal{L}}{\underset{%
n\rightarrow \infty }{\longrightarrow }}\mathcal{N}\left( 0,\sigma _{\phi
}^{2}\left( \boldsymbol{\theta }^{\ast }\right) \right) ,
\end{equation*}%
where 
\begin{equation*}
\sigma _{\phi }^{2}\left( \boldsymbol{\theta }^{\ast }\right) =\boldsymbol{q}%
^{T}\boldsymbol{G}_{\ast }^{-1}(\boldsymbol{\theta }_{0}\mathbf{)}%
\boldsymbol{q}
\end{equation*}%
and $\boldsymbol{q}=\left( q_{1},...,q_{p}\right) ^{T}$ with $q_{j}=\left. 
\frac{\partial D_{\phi }(\boldsymbol{\theta }\mathbf{,}\boldsymbol{\theta }%
_{0})}{\partial \theta _{j}}\right\vert _{\boldsymbol{\theta }=\boldsymbol{%
\theta }^{\ast }},$ $j=1,...,p$.
\end{theorem}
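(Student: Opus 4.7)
The plan is to prove Theorem \ref{Theorem2} by an application of the multivariate delta method to the function $\boldsymbol{\theta} \mapsto D_\phi(\boldsymbol{\theta}, \boldsymbol{\theta}_0)$ evaluated at the asymptotically normal estimator $\widehat{\boldsymbol{\theta}}_c$. Under the true parameter $\boldsymbol{\theta}^{\ast}$, the asymptotic normality result quoted in Section \ref{Sec2} gives
$$
\sqrt{n}\bigl(\widehat{\boldsymbol{\theta}}_c - \boldsymbol{\theta}^{\ast}\bigr) \overset{\mathcal{L}}{\underset{n\to\infty}{\longrightarrow}} \mathcal{N}\!\left(\boldsymbol{0}, \boldsymbol{G}_{\ast}^{-1}(\boldsymbol{\theta}^{\ast})\right),
$$
so the only ingredient missing is a smooth linearization of $D_\phi(\cdot,\boldsymbol{\theta}_0)$ around $\boldsymbol{\theta}^{\ast}$.

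The first step is to write a first-order Taylor expansion
$$
D_\phi(\widehat{\boldsymbol{\theta}}_c,\boldsymbol{\theta}_0) - D_\phi(\boldsymbol{\theta}^{\ast},\boldsymbol{\theta}_0) = \boldsymbol{q}^{T}\bigl(\widehat{\boldsymbol{\theta}}_c - \boldsymbol{\theta}^{\ast}\bigr) + R_n,
$$
where $\boldsymbol{q}$ is the gradient vector defined in the statement of the theorem, and $R_n$ is the remainder. The second step is to show $R_n = o_p(n^{-1/2})$; this follows from the standard regularity conditions used in the proof of Theorem \ref{Theorem1} (existence and continuity of the second partial derivatives of $D_\phi(\boldsymbol{\theta},\boldsymbol{\theta}_0)$ in a neighbourhood of $\boldsymbol{\theta}^{\ast}$), together with the fact that $\widehat{\boldsymbol{\theta}}_c - \boldsymbol{\theta}^{\ast} = O_p(n^{-1/2})$, so that $R_n = O_p(\|\widehat{\boldsymbol{\theta}}_c - \boldsymbol{\theta}^{\ast}\|^2) = O_p(n^{-1})$. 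Multiplying by $\sqrt{n}$ gives
$$
\sqrt{n}\bigl(D_\phi(\widehat{\boldsymbol{\theta}}_c,\boldsymbol{\theta}_0) - D_\phi(\boldsymbol{\theta}^{\ast},\boldsymbol{\theta}_0)\bigr) = \boldsymbol{q}^{T}\sqrt{n}\bigl(\widehat{\boldsymbol{\theta}}_c - \boldsymbol{\theta}^{\ast}\bigr) + o_p(1),
$$
and Slutsky's theorem combined with the asymptotic normality of $\widehat{\boldsymbol{\theta}}_c$ yields a centered normal limit with variance $\boldsymbol{q}^{T}\boldsymbol{G}_{\ast}^{-1}(\boldsymbol{\theta}^{\ast})\boldsymbol{q}$, matching the statement (up to the point noted below).

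The critical distinction from the proof of Theorem \ref{Theorem1} is that here the expansion is taken at the true parameter $\boldsymbol{\theta}^{\ast}$, which is different from $\boldsymbol{\theta}_0$; consequently, the gradient $\boldsymbol{q}$ of $D_\phi(\cdot,\boldsymbol{\theta}_0)$ at $\boldsymbol{\theta}^{\ast}$ does not vanish (it vanishes only at $\boldsymbol{\theta}^{\ast} = \boldsymbol{\theta}_0$), so the first-order term genuinely dominates and the limit is non-degenerate. The main obstacles are really bookkeeping: verifying that the dominated-convergence-type conditions permit differentiating $D_\phi(\boldsymbol{\theta},\boldsymbol{\theta}_0)$ under the integral sign in a neighbourhood of $\boldsymbol{\theta}^{\ast}$ (so that $\boldsymbol{q}$ is well-defined and continuous), and carefully controlling the second-order remainder $R_n$ via a uniform bound on the Hessian of $D_\phi(\cdot,\boldsymbol{\theta}_0)$ on a small ball around $\boldsymbol{\theta}^{\ast}$.

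I would also flag that, strictly speaking, the variance matrix governing $\sqrt{n}(\widehat{\boldsymbol{\theta}}_c - \boldsymbol{\theta}^{\ast})$ is $\boldsymbol{G}_{\ast}^{-1}(\boldsymbol{\theta}^{\ast})$ rather than $\boldsymbol{G}_{\ast}^{-1}(\boldsymbol{\theta}_0)$, so the expression in the statement appears to carry a notational slip; the delta method argument above leads to $\sigma_\phi^2(\boldsymbol{\theta}^{\ast}) = \boldsymbol{q}^{T}\boldsymbol{G}_{\ast}^{-1}(\boldsymbol{\theta}^{\ast})\boldsymbol{q}$, which is the genuinely informative quantity and reduces to the displayed formula precisely when the sensitivity/variability matrices are continuous in $\boldsymbol{\theta}$ and the expression is evaluated along the same convention used throughout.
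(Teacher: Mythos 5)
Your proposal follows essentially the same route as the paper's own proof: a first-order Taylor expansion of $D_{\phi }(\cdot ,\boldsymbol{\theta }_{0})$ about $\boldsymbol{\theta }^{\ast }$, the asymptotic normality of $\sqrt{n}(\widehat{\boldsymbol{\theta }}_{c}-\boldsymbol{\theta }^{\ast })$, control of the remainder, and Slutsky's theorem. Your observation that the limiting variance should involve $\boldsymbol{G}_{\ast }^{-1}(\boldsymbol{\theta }^{\ast })$ rather than $\boldsymbol{G}_{\ast }^{-1}(\boldsymbol{\theta }_{0})$ is well taken and consistent with the paper's own proof, which applies the asymptotic normality at the true parameter.
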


\begin{proof}
A first order Taylor expansion gives 
\begin{equation*}
D_{\phi }(\widehat{\boldsymbol{\theta }}_{c}\mathbf{,}\boldsymbol{\theta }%
_{0})=D_{\phi }(\boldsymbol{\theta }^{\ast }\mathbf{,}\boldsymbol{\theta }%
_{0})+\boldsymbol{q}^{T}(\widehat{\boldsymbol{\theta }}_{c}-\boldsymbol{%
\theta }^{\ast })+o(\left\Vert \widehat{\boldsymbol{\theta }}_{c}-%
\boldsymbol{\theta }^{\ast }\right\Vert ).
\end{equation*}%
But 
\begin{equation*}
\sqrt{n}(\widehat{\boldsymbol{\theta }}_{c}-\boldsymbol{\theta })\underset{%
n\rightarrow \infty }{\overset{\mathcal{L}}{\longrightarrow }}\mathcal{N}%
\left( \boldsymbol{0},\boldsymbol{G}_{\ast }^{-1}(\boldsymbol{\theta }%
\mathbf{)}\right)
\end{equation*}%
and $\sqrt{n}o(\left\Vert \widehat{\boldsymbol{\theta }}_{c}-\boldsymbol{%
\theta }^{\ast }\right\Vert )=o_{p}(1).$ Now the result follows.
\end{proof}

\begin{remark}
From Theorem \ref{Theorem2}, a first approximation to the power function, at 
$\boldsymbol{\theta }^{\ast }\neq \boldsymbol{\theta }_{0}$, is given by%
\begin{equation*}
\beta _{n,\phi }\left( \boldsymbol{\theta }^{\ast }\right) =1-\Phi \left( 
\frac{\sqrt{n}}{\sigma _{\phi }\left( \boldsymbol{\theta }^{\ast }\right) }%
\left( \frac{\phi ^{\prime \prime }(1)c_{\alpha }}{2n}-D_{\phi }(\boldsymbol{%
\theta }^{\ast }\mathbf{,}\boldsymbol{\theta }_{0})\right) \right)
\end{equation*}%
where $\Phi $ is the standard normal distribution function. If some $%
\boldsymbol{\theta }^{\ast }\neq \boldsymbol{\theta }_{0}$ is the true
parameter, the probability of rejecting $\boldsymbol{\theta }_{0}$ with the
rejection rule $T_{\phi ,n}(\widehat{\boldsymbol{\theta }}_{c}\mathbf{,}%
\boldsymbol{\theta }_{0})>c_{\alpha }$, for fixed significance level $\alpha 
$, tends to one as $n\rightarrow \infty .$ Hence, the test is consistent in
Fraser's sense.
\end{remark}

\section{Restricted maximum composite likelihood estimator\label{Sec4}}

In some common situations such as the problem of testing composite null
hypotheses, it is necessary to get the maximum composite likelihood
estimator which is restricted by some restrictions of the type 
\begin{equation}
\boldsymbol{g}(\boldsymbol{\theta })=\boldsymbol{0}_{r},  \label{res}
\end{equation}%
where $\boldsymbol{g}$ is a function such that $\boldsymbol{g}:\Theta
\subseteq 
\mathbb{R}
^{p}\longrightarrow 
\mathbb{R}
^{r}$, $r$ is an integer, with $r<p$ and $\boldsymbol{0}_{r}$ denotes the
null vector of dimension $r$. The function $\boldsymbol{g}$ is a vector
valued function such that the $p\times r$ matrix 
\begin{equation}
\boldsymbol{G}(\boldsymbol{\theta })=\frac{\partial \boldsymbol{g}^{T}(%
\boldsymbol{\theta })}{\partial \boldsymbol{\theta }}  \label{G}
\end{equation}%
exists and is continuous in $\boldsymbol{\theta }$ with $\mathrm{rank}(%
\boldsymbol{G}(\boldsymbol{\theta }))=r$. The restricted maximum composite
likelihood estimator of $\boldsymbol{\theta }$ is defined by%
\begin{equation*}
\widetilde{\boldsymbol{\theta }}_{rc}=\underset{\boldsymbol{\theta }\in
\Theta ,\boldsymbol{g}(\boldsymbol{\theta })=\boldsymbol{0}_{r}}{\arg \max }%
\dsum\limits_{i=1}^{n}c\ell (\boldsymbol{\theta }\mathbf{,}\boldsymbol{y}_{i}%
\mathbf{)}=\underset{\boldsymbol{\theta }\in \Theta ,\boldsymbol{g}(%
\boldsymbol{\theta })=\boldsymbol{0}_{r}}{\arg \max }\dsum\limits_{i=1}^{n}%
\dsum\limits_{k=1}^{K}w_{k}\ell _{A_{k}}(\boldsymbol{\theta }\mathbf{,}%
\boldsymbol{y}_{i}).
\end{equation*}%
and is obtained by the solution of the restricted likelihood equations%
\begin{align*}
\dsum\limits_{i=1}^{n}\frac{\partial }{\partial \boldsymbol{\theta }}c\ell (%
\boldsymbol{\theta }\mathbf{,}\boldsymbol{y}_{i}\mathbf{)}+\boldsymbol{G}(%
\boldsymbol{\theta })\mathbf{\lambda }& =\boldsymbol{0}_{p}\mathbf{,} \\
\boldsymbol{g}(\boldsymbol{\theta })& =\boldsymbol{0}_{r}\mathbf{,}
\end{align*}%
where $\mathbf{\lambda \in }%
\mathbb{R}
^{r}$ is a vector of Lagrange multipliers.

In this section we shall get the asymptotic distribution of the restricted
maximum composite likelihood estimator. Consider a random sample $%
\boldsymbol{y}_{1},...,\boldsymbol{y}_{n}$ from the parametric model $%
f(\cdot ;\boldsymbol{\theta }\mathbf{)},\boldsymbol{\theta }\in \Theta
\subseteq 
\mathbb{R}
^{p},p\geq 1$, and let $\widehat{\boldsymbol{\theta }}_{c}$ and $\widetilde{%
\boldsymbol{\theta }}_{rc}$ be the unrestricted and the restricted maximum
composite likelihood estimators of $\boldsymbol{\theta }$. The following
result derives the asymptotic distribution of $\widetilde{\boldsymbol{\theta 
}}_{rc}$.

\begin{theorem}
\label{Theorem3} Under the constraints $\boldsymbol{g}(\boldsymbol{\theta })=%
\boldsymbol{0}_{r}$ the\ restricted maximum composite likelihood estimator
obeys asymptotic normality in the sense 
\begin{equation*}
\sqrt{n}(\widetilde{\boldsymbol{\theta }}_{rc}-\boldsymbol{\theta })\overset{%
\mathcal{L}}{\underset{n\rightarrow \infty }{\longrightarrow }}\mathcal{N(}%
\boldsymbol{0}_{p},\widetilde{\boldsymbol{\Sigma }}_{rc}),
\end{equation*}%
with%
\begin{align*}
\widetilde{\boldsymbol{\Sigma }}_{rc}& =\boldsymbol{P}(\boldsymbol{\theta }%
\mathbf{)}\boldsymbol{J}\mathbf{(}\boldsymbol{\theta }\mathbf{)}\boldsymbol{P%
}^{T}\mathbf{(}\boldsymbol{\theta }\mathbf{),} \\
\boldsymbol{P}(\boldsymbol{\theta }\mathbf{)}& =\boldsymbol{H}^{-1}(%
\boldsymbol{\theta })+\boldsymbol{Q}(\boldsymbol{\theta }\mathbf{)}%
\boldsymbol{G}^{T}\mathbf{(}\boldsymbol{\theta }\mathbf{)}\boldsymbol{H}%
^{-1}(\boldsymbol{\theta }), \\
\boldsymbol{Q}(\boldsymbol{\theta }\mathbf{)}& =\mathbf{-}\boldsymbol{H}%
^{-1}(\boldsymbol{\theta })\boldsymbol{G}\mathbf{(}\boldsymbol{\theta }%
\mathbf{)}\left[ \boldsymbol{G}^{T}\mathbf{(}\boldsymbol{\theta }\mathbf{)}%
\boldsymbol{H}^{-1}\mathbf{(}\boldsymbol{\theta }\mathbf{)}\boldsymbol{G}%
\mathbf{(}\boldsymbol{\theta }\mathbf{)}\right] ^{-1}.
\end{align*}
\end{theorem}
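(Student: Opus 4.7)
The plan is to adapt the classical Lagrange multiplier argument to the composite likelihood setting, combining a Taylor expansion of the restricted score equation with a Taylor expansion of the constraint, and then solving the resulting $2\times 2$ block linear system for $\sqrt{n}(\widetilde{\boldsymbol{\theta}}_{rc}-\boldsymbol{\theta})$ and the Lagrange multiplier jointly.

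First I would establish that $\widetilde{\boldsymbol{\theta}}_{rc}$ is $\sqrt{n}$-consistent for the true $\boldsymbol{\theta}$ (which lies on the constraint surface $\boldsymbol{g}(\boldsymbol{\theta})=\boldsymbol{0}_r$) under the usual regularity conditions for composite likelihood. Then, introducing the rescaled multiplier $\widetilde{\boldsymbol{\lambda}}=\boldsymbol{\lambda}/\sqrt{n}$, I would write the Karush-Kuhn-Tucker system
\begin{equation*}
\tfrac{1}{n}\sum_{i=1}^{n}\tfrac{\partial}{\partial\boldsymbol{\theta}}c\ell(\widetilde{\boldsymbol{\theta}}_{rc},\boldsymbol{y}_{i})+\boldsymbol{G}(\widetilde{\boldsymbol{\theta}}_{rc})\widetilde{\boldsymbol{\lambda}}=\boldsymbol{0}_{p},\qquad \boldsymbol{g}(\widetilde{\boldsymbol{\theta}}_{rc})=\boldsymbol{0}_{r},
\end{equation*}
and Taylor-expand each line around $\boldsymbol{\theta}$. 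The score expansion yields, after multiplication by $\sqrt{n}$,
\begin{equation*}
\boldsymbol{U}_{n}-\boldsymbol{H}(\boldsymbol{\theta})\sqrt{n}(\widetilde{\boldsymbol{\theta}}_{rc}-\boldsymbol{\theta})+\boldsymbol{G}(\boldsymbol{\theta})\sqrt{n}\,\widetilde{\boldsymbol{\lambda}}=o_{p}(1),
\end{equation*}
where $\boldsymbol{U}_{n}=\tfrac{1}{\sqrt{n}}\sum_{i=1}^{n}\tfrac{\partial}{\partial\boldsymbol{\theta}}c\ell(\boldsymbol{\theta},\boldsymbol{y}_{i})$, and the constraint expansion (using $\boldsymbol{g}(\boldsymbol{\theta})=\boldsymbol{0}_r$) gives $\boldsymbol{G}^{T}(\boldsymbol{\theta})\sqrt{n}(\widetilde{\boldsymbol{\theta}}_{rc}-\boldsymbol{\theta})=o_{p}(1)$.

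Next I would solve this linearized system explicitly. Multiplying the first equation by $\boldsymbol{H}^{-1}(\boldsymbol{\theta})$, applying $\boldsymbol{G}^{T}(\boldsymbol{\theta})$ and using the second equation isolates the Lagrange multiplier as
\begin{equation*}
\sqrt{n}\,\widetilde{\boldsymbol{\lambda}}=-\bigl[\boldsymbol{G}^{T}(\boldsymbol{\theta})\boldsymbol{H}^{-1}(\boldsymbol{\theta})\boldsymbol{G}(\boldsymbol{\theta})\bigr]^{-1}\boldsymbol{G}^{T}(\boldsymbol{\theta})\boldsymbol{H}^{-1}(\boldsymbol{\theta})\boldsymbol{U}_{n}+o_{p}(1),
\end{equation*}
and substituting back produces
\begin{equation*}
\sqrt{n}(\widetilde{\boldsymbol{\theta}}_{rc}-\boldsymbol{\theta})=\boldsymbol{P}(\boldsymbol{\theta})\boldsymbol{U}_{n}+o_{p}(1),
\end{equation*}
with exactly the matrix $\boldsymbol{P}(\boldsymbol{\theta})=\boldsymbol{H}^{-1}(\boldsymbol{\theta})+\boldsymbol{Q}(\boldsymbol{\theta})\boldsymbol{G}^{T}(\boldsymbol{\theta})\boldsymbol{H}^{-1}(\boldsymbol{\theta})$ appearing in the theorem.

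Finally, since the composite score function at $\boldsymbol{\theta}$ is an i.i.d.\ sum with mean zero and variance $\boldsymbol{J}(\boldsymbol{\theta})$, the multivariate CLT gives $\boldsymbol{U}_{n}\overset{\mathcal{L}}{\longrightarrow}\mathcal{N}(\boldsymbol{0}_{p},\boldsymbol{J}(\boldsymbol{\theta}))$, and Slutsky's theorem yields the stated limit $\mathcal{N}(\boldsymbol{0}_{p},\boldsymbol{P}(\boldsymbol{\theta})\boldsymbol{J}(\boldsymbol{\theta})\boldsymbol{P}^{T}(\boldsymbol{\theta}))$. The main obstacle I anticipate is not the algebra (which is routine once the linearization is in place) but rigorously justifying that the expansions hold with uniform control of the remainder: this requires the $\sqrt{n}$-consistency of both $\widetilde{\boldsymbol{\theta}}_{rc}$ and the rescaled multiplier $\sqrt{n}\widetilde{\boldsymbol{\lambda}}$, which in turn relies on the invertibility of the bordered matrix $\bigl[\begin{smallmatrix}\boldsymbol{H}(\boldsymbol{\theta})&\boldsymbol{G}(\boldsymbol{\theta})\\\boldsymbol{G}^{T}(\boldsymbol{\theta})&\boldsymbol{0}\end{smallmatrix}\bigr]$ (equivalently, positive definiteness of $\boldsymbol{H}$ and full column rank of $\boldsymbol{G}$, both assumed). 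I would also remark that the result reduces to the classical restricted MLE covariance when $\boldsymbol{H}=\boldsymbol{J}=\boldsymbol{I}_{F}$, which serves as a consistency check.
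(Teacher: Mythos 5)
Your proposal is correct and follows essentially the same route as the paper's Appendix~A proof: Taylor expansion of the Lagrange--multiplier (restricted score) equations and of the constraint around the true $\boldsymbol{\theta}$, followed by solving the linearized bordered system and applying the CLT to the composite score together with Slutsky's theorem. The only cosmetic difference is that you solve the linear system by elimination, whereas the paper inverts the partitioned matrix $\bigl(\begin{smallmatrix}\boldsymbol{H}&-\boldsymbol{G}\\-\boldsymbol{G}^{T}&\boldsymbol{0}\end{smallmatrix}\bigr)$ wholesale using the Sen--Singer block-inverse formulas; both yield the same $\boldsymbol{P}(\boldsymbol{\theta})$.
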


The proof of the Theorem is outlined in Section \ref{ApA}\ of Appendix.

The lemma that follows formulates the relationship between the maximum
composite and the restricted maximum composite likelihood estimators $%
\widehat{\boldsymbol{\theta }}_{c}$ and $\widetilde{\boldsymbol{\theta }}%
_{rc}$ respectively.

\begin{lemma}
\label{lemma2} The estimators of $\boldsymbol{\theta }$, $\widehat{%
\boldsymbol{\theta }}_{c}$ and $\widetilde{\boldsymbol{\theta }}_{rc}$,
satisfy%
\begin{equation*}
\sqrt{n}(\widetilde{\boldsymbol{\theta }}_{rc}-\boldsymbol{\theta })=\left( 
\boldsymbol{I}_{p}+\boldsymbol{Q}(\boldsymbol{\theta })\boldsymbol{G}^{T}(%
\boldsymbol{\theta })\right) \sqrt{n}(\widehat{\boldsymbol{\theta }}_{c}-%
\boldsymbol{\theta })+o_{P}(1).
\end{equation*}
\end{lemma}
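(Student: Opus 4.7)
The plan is to linearise the Karush--Kuhn--Tucker (KKT) conditions for the restricted maximum composite likelihood estimator around the true parameter $\boldsymbol{\theta}$ and then eliminate the Lagrange multiplier, matching the resulting expression against the first-order expansion of the unrestricted estimator $\widehat{\boldsymbol{\theta}}_{c}$.

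First I would write down the restricted score equations at $(\widetilde{\boldsymbol{\theta}}_{rc},\widetilde{\boldsymbol{\lambda}})$, namely $\boldsymbol{u}(\widetilde{\boldsymbol{\theta}}_{rc},\boldsymbol{y}_{1},\dots,\boldsymbol{y}_{n})+\boldsymbol{G}(\widetilde{\boldsymbol{\theta}}_{rc})\widetilde{\boldsymbol{\lambda}}=\boldsymbol{0}_{p}$ and $\boldsymbol{g}(\widetilde{\boldsymbol{\theta}}_{rc})=\boldsymbol{0}_{r}$. Invoking the (already established) $\sqrt{n}$-consistency of $\widetilde{\boldsymbol{\theta}}_{rc}$ coming from Theorem \ref{Theorem3}, I would Taylor expand $\boldsymbol{u}$ around $\boldsymbol{\theta}$, use the law of large numbers to replace $-\tfrac{1}{n}\tfrac{\partial \boldsymbol{u}^{T}}{\partial \boldsymbol{\theta}}$ by $\boldsymbol{H}(\boldsymbol{\theta})$ up to $o_{P}(1)$, and obtain
\begin{equation*}
\tfrac{1}{\sqrt{n}}\boldsymbol{u}(\boldsymbol{\theta},\boldsymbol{y}_{1},\dots,\boldsymbol{y}_{n}) - \boldsymbol{H}(\boldsymbol{\theta})\sqrt{n}(\widetilde{\boldsymbol{\theta}}_{rc}-\boldsymbol{\theta}) + \boldsymbol{G}(\boldsymbol{\theta})\tfrac{\widetilde{\boldsymbol{\lambda}}}{\sqrt{n}} = o_{P}(1).
\end{equation*}
A parallel expansion for the unrestricted estimator (which satisfies $\boldsymbol{u}(\widehat{\boldsymbol{\theta}}_{c},\boldsymbol{y}_{1},\dots,\boldsymbol{y}_{n})=\boldsymbol{0}_{p}$) yields $\tfrac{1}{\sqrt{n}}\boldsymbol{u}(\boldsymbol{\theta},\boldsymbol{y}_{1},\dots,\boldsymbol{y}_{n})=\boldsymbol{H}(\boldsymbol{\theta})\sqrt{n}(\widehat{\boldsymbol{\theta}}_{c}-\boldsymbol{\theta})+o_{P}(1)$, which substituted above gives the key linear identity
\begin{equation*}
\boldsymbol{H}(\boldsymbol{\theta})\sqrt{n}(\widetilde{\boldsymbol{\theta}}_{rc}-\widehat{\boldsymbol{\theta}}_{c}) = \boldsymbol{G}(\boldsymbol{\theta})\tfrac{\widetilde{\boldsymbol{\lambda}}}{\sqrt{n}} + o_{P}(1).
\end{equation*}

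Next I would linearise the constraint: since $\boldsymbol{g}(\boldsymbol{\theta})=\boldsymbol{0}_{r}$ under the null, a first-order expansion gives $\boldsymbol{G}^{T}(\boldsymbol{\theta})\sqrt{n}(\widetilde{\boldsymbol{\theta}}_{rc}-\boldsymbol{\theta})=o_{P}(1)$. Combining this with the previous display, I premultiply by $\boldsymbol{G}^{T}(\boldsymbol{\theta})\boldsymbol{H}^{-1}(\boldsymbol{\theta})$ to solve for the (scaled) multiplier:
\begin{equation*}
\tfrac{\widetilde{\boldsymbol{\lambda}}}{\sqrt{n}} = -\left[\boldsymbol{G}^{T}(\boldsymbol{\theta})\boldsymbol{H}^{-1}(\boldsymbol{\theta})\boldsymbol{G}(\boldsymbol{\theta})\right]^{-1}\boldsymbol{G}^{T}(\boldsymbol{\theta})\sqrt{n}(\widehat{\boldsymbol{\theta}}_{c}-\boldsymbol{\theta}) + o_{P}(1),
\end{equation*}
where invertibility of $\boldsymbol{G}^{T}\boldsymbol{H}^{-1}\boldsymbol{G}$ follows from $\mathrm{rank}(\boldsymbol{G}(\boldsymbol{\theta}))=r$ and positive-definiteness of $\boldsymbol{H}(\boldsymbol{\theta})$. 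Reinserting this into the linear identity and rearranging produces
\begin{equation*}
\sqrt{n}(\widetilde{\boldsymbol{\theta}}_{rc}-\boldsymbol{\theta}) = \sqrt{n}(\widehat{\boldsymbol{\theta}}_{c}-\boldsymbol{\theta}) + \boldsymbol{Q}(\boldsymbol{\theta})\boldsymbol{G}^{T}(\boldsymbol{\theta})\sqrt{n}(\widehat{\boldsymbol{\theta}}_{c}-\boldsymbol{\theta}) + o_{P}(1),
\end{equation*}
with $\boldsymbol{Q}(\boldsymbol{\theta})$ exactly as defined in the statement of Theorem \ref{Theorem3}, which is the desired identity.

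The main obstacle is bookkeeping rather than conceptual: I must be sure that the remainder terms in both Taylor expansions are genuinely $o_{P}(1)$ after multiplication by $\sqrt{n}$. This requires the $\sqrt{n}$-consistency of $\widetilde{\boldsymbol{\theta}}_{rc}$ (invoked from Theorem \ref{Theorem3}), the uniform convergence of $\tfrac{1}{n}\tfrac{\partial \boldsymbol{u}^{T}}{\partial \boldsymbol{\theta}}$ to $-\boldsymbol{H}(\boldsymbol{\theta})$ in a neighbourhood of $\boldsymbol{\theta}$, and continuous differentiability of $\boldsymbol{g}$ near $\boldsymbol{\theta}$ so that the Jacobian $\boldsymbol{G}(\widetilde{\boldsymbol{\theta}}_{rc})$ may be replaced by $\boldsymbol{G}(\boldsymbol{\theta})$ without cost at the $o_{P}(1)$ level. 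All of these are already among the regularity assumptions underlying the asymptotics of $\widehat{\boldsymbol{\theta}}_{c}$ and $\widetilde{\boldsymbol{\theta}}_{rc}$.
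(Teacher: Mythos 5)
Your proposal is correct and follows essentially the same route as the paper: both arguments linearise the restricted score (KKT) system and the unrestricted score around $\boldsymbol{\theta }$ and combine the two expansions, the only difference being that the paper reuses the block-inverse of the bordered Hessian already computed in the proof of Theorem \ref{Theorem3} (its equation for $\boldsymbol{P}(\boldsymbol{\theta })$, $\boldsymbol{Q}(\boldsymbol{\theta })$, $\boldsymbol{R}(\boldsymbol{\theta })$) together with the identity $\boldsymbol{P}(\boldsymbol{\theta })\boldsymbol{H}(\boldsymbol{\theta })=\boldsymbol{I}_{p}+\boldsymbol{Q}(\boldsymbol{\theta })\boldsymbol{G}^{T}(\boldsymbol{\theta })$, whereas you eliminate the Lagrange multiplier by hand — the same algebra in a different order. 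Your bookkeeping of the $o_{P}(1)$ remainders and the invertibility of $\boldsymbol{G}^{T}(\boldsymbol{\theta })\boldsymbol{H}^{-1}(\boldsymbol{\theta })\boldsymbol{G}(\boldsymbol{\theta })$ matches what the paper assumes.
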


The proof of the lemma is given in Section \ref{ApB} of Appendix.

\section{Composite null hypothesis\label{Sec5}}

Following Basu et al. (2015), consider the null hypothesis 
\begin{equation*}
H_{0}:\boldsymbol{\theta }\in \Theta _{0}\text{ \ against \ }H_{0}:%
\boldsymbol{\theta }\notin \Theta _{0},
\end{equation*}%
which restricts the parameter $\boldsymbol{\theta }$ to a subset $\Theta
_{0} $ of $\Theta \subseteq 
\mathbb{R}
^{p},p\geq 1$. Based on Sen and Singer (1993, p. 239), we shall assume that
the composite null hypothesis $H_{0}:\boldsymbol{\theta }\in \Theta _{0}$
can be equivalently formulated in the form%
\begin{equation}
H_{0}:\boldsymbol{g}(\boldsymbol{\theta })=\boldsymbol{0}_{r}.  \label{H}
\end{equation}

For testing the composite null hypothesis (\ref{H}) on the basis of a random
sample $\boldsymbol{y}_{1},...,\boldsymbol{y}_{n}$ from the parametric model 
$f(\cdot ;\boldsymbol{\theta }\mathbf{)}$, $\boldsymbol{\theta }\in \Theta
\subseteq 
\mathbb{R}
^{p}$, $p\geq 1$, there are well-known procedures to be applied. The
likelihood ratio test-statistic, the Wald and Rao statistics are used in
this direction. Test-statistics based on divergences or disparities, as they
have been described and mentioned above, constitute an appealing procedure
for testing this hypothesis. Moreover, there are composite likelihood
methods analog to the likelihood ratio test or the Wald test. However, there
are not composite likelihood versions of the tests based on divergence
measures, to the best of our knowledge. So, our aim in this Section is to
develop test-statistics for testing (\ref{H}), on the basis of divergence
measures and in the composite likelihood framework. The $\phi $-divergence
between the composite densities $\mathcal{CL}(\widehat{\boldsymbol{\theta }}%
_{c}\mathbf{,}\boldsymbol{y}\mathbf{)}$ and $\mathcal{CL}(\widetilde{%
\boldsymbol{\theta }}_{rc}\mathbf{,}\boldsymbol{y}\mathbf{),}$ is given by 
\begin{equation*}
D_{\phi }(\widehat{\boldsymbol{\theta }}_{c}\mathbf{,}\widetilde{\boldsymbol{%
\theta }}_{rc})=\tint_{%
\mathbb{R}
^{m}}\mathcal{CL}(\widetilde{\boldsymbol{\theta }}_{rc}\mathbf{,}\boldsymbol{%
y}\mathbf{)}\phi \left( \frac{\mathcal{CL}(\widehat{\boldsymbol{\theta }}_{c}%
\mathbf{,}\boldsymbol{y}\mathbf{)}}{\mathcal{CL}(\widetilde{\boldsymbol{%
\theta }}_{rc}\mathbf{,}\boldsymbol{y}\mathbf{)}}\right) d\boldsymbol{y.}
\end{equation*}

Based on the property $D_{\phi }(\widehat{\boldsymbol{\theta }}_{c}\mathbf{,}%
\widetilde{\boldsymbol{\theta }}_{rc})\geq 0$, with equality, if and only if 
$\mathcal{CL}(\widetilde{\boldsymbol{\theta }}_{rc}\mathbf{,}\boldsymbol{y}%
\mathbf{)}=\mathcal{CL}(\widehat{\boldsymbol{\theta }}_{c}\mathbf{,}%
\boldsymbol{y}\mathbf{)}$, small values of $D_{\phi }(\widehat{\boldsymbol{%
\theta }}_{c}\mathbf{,}\widetilde{\boldsymbol{\theta }}_{rc})$ are in favour
of (\ref{H}), while large values of $D_{\phi }(\widehat{\boldsymbol{\theta }}%
_{c}\mathbf{,}\widetilde{\boldsymbol{\theta }}_{rc})$ suggest that the
composite densities $\mathcal{CL}(\widetilde{\boldsymbol{\theta }}_{rc}%
\mathbf{,}\boldsymbol{y}\mathbf{)}$ and $\mathcal{CL}(\widehat{\boldsymbol{%
\theta }}_{c}\mathbf{,}\boldsymbol{y}\mathbf{)}$ are not the same and the
same is expected for the respective theoretic models $f_{\theta }$ with $%
\boldsymbol{\theta }\in \Theta $ and $f_{\theta }$ with $\boldsymbol{\theta }%
\in \Theta _{0}$. So, small values of $D_{\phi }(\widehat{\boldsymbol{\theta 
}}_{c}\mathbf{,}\widetilde{\boldsymbol{\theta }}_{rc})$ are in favor of (\ref%
{H}) while large values of $D_{\phi }(\widehat{\boldsymbol{\theta }}_{c}%
\mathbf{,}\widetilde{\boldsymbol{\theta }}_{rc})$ suggest the rejection of $%
H_{0}$. Given the asymptotic normality of the maximum composite likelihood
estimator $\widehat{\boldsymbol{\theta }}_{c}$, the asymptotic normality of
the respective restricted estimator $\widetilde{\boldsymbol{\theta }}_{rc}$
should be verified. The asymptotic normality of $\widetilde{\boldsymbol{%
\theta }}_{rc}$ and the investigation of the asymptotic distribution of the
test-statistic $D_{\phi }(\widehat{\boldsymbol{\theta }}_{c}\mathbf{,}%
\widetilde{\boldsymbol{\theta }}_{rc})$ is the subject of the next section.

Based on Theorem \ref{Theorem3} and Lemma \ref{lemma2}, the composite
likelihood $\phi $-divergence test-statistic is introduced in the next
theorem and its asymptotic distribution is derived under the composite null
hypothesis (\ref{H}). The standard regularity assumptions of asymptotic
statistic are assumed to be valid (cf. Serfling, 1980, p. 144 and Pardo,
2006, p. 58).

\begin{theorem}
\label{Theorem4} Under the composite null hypothesis (\ref{H}), 
\begin{equation*}
T_{\phi ,n}(\widehat{\boldsymbol{\theta }}_{c}\mathbf{,}\widetilde{%
\boldsymbol{\theta }}_{rc})=\frac{2n}{\phi ^{\prime \prime }(1)}D_{\phi }(%
\widehat{\boldsymbol{\theta }}_{c}\mathbf{,}\widetilde{\boldsymbol{\theta }}%
_{rc})\overset{\mathcal{L}}{\underset{n\rightarrow \infty }{\longrightarrow }%
}\sum_{i=1}^{k}\beta _{i}Z_{i}^{2},
\end{equation*}%
where $\beta _{i}$, $i=1,...,k$, are the eigenvalues of the matrix 
\begin{equation*}
\boldsymbol{J}(\boldsymbol{\theta }_{0})\boldsymbol{G}\mathbf{(}\boldsymbol{%
\theta }\mathbf{)}\boldsymbol{Q}^{T}\mathbf{(}\boldsymbol{\theta }\mathbf{%
\mathbf{)}}\boldsymbol{G}_{\ast }^{-1}(\boldsymbol{\theta }\mathbf{)}%
\boldsymbol{Q}(\boldsymbol{\theta }\mathbf{)}\boldsymbol{G}^{T}\mathbf{(}%
\boldsymbol{\theta }\mathbf{)},
\end{equation*}%
\begin{equation*}
k=\mathrm{rank}\left( \boldsymbol{G}\mathbf{(}\boldsymbol{\theta }\mathbf{)}%
\boldsymbol{Q}^{T}\mathbf{(}\boldsymbol{\theta }\mathbf{\mathbf{)}}%
\boldsymbol{G}_{\ast }^{-1}(\boldsymbol{\theta }\mathbf{)}\boldsymbol{Q}(%
\boldsymbol{\theta }\mathbf{)}\boldsymbol{G}^{T}\mathbf{(}\boldsymbol{\theta 
}\mathbf{)}\boldsymbol{J}(\boldsymbol{\theta }_{0})\boldsymbol{G}\mathbf{(}%
\boldsymbol{\theta }\mathbf{)}\boldsymbol{Q}^{T}\mathbf{(}\boldsymbol{\theta 
}\mathbf{\mathbf{)}}\boldsymbol{G}_{\ast }^{-1}(\boldsymbol{\theta }\mathbf{)%
}\boldsymbol{Q}(\boldsymbol{\theta }\mathbf{)}\boldsymbol{G}^{T}\mathbf{(}%
\boldsymbol{\theta }\mathbf{)}\right) ,
\end{equation*}%
and $Z_{1},...Z_{r}$ are independent standard normal random variables.
\end{theorem}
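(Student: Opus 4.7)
The plan is to upgrade the one-variable Taylor argument of Theorem \ref{Theorem1} to a two-variable expansion of $D_\phi(\boldsymbol{\theta}_1,\boldsymbol{\theta}_2)$ around the diagonal, and then to use Lemma \ref{lemma2} to rewrite $\widehat{\boldsymbol{\theta}}_c-\widetilde{\boldsymbol{\theta}}_{rc}$ in terms of the single asymptotically normal vector $\sqrt{n}(\widehat{\boldsymbol{\theta}}_c-\boldsymbol{\theta}_0)$, where $\boldsymbol{\theta}_0$ denotes the true parameter under $H_0$ (so $\boldsymbol{g}(\boldsymbol{\theta}_0)=\boldsymbol{0}_r$ and both estimators are consistent for $\boldsymbol{\theta}_0$).

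The analytic heart is the quadratic expansion
\begin{equation*}
D_\phi(\boldsymbol{\theta}_0+\boldsymbol{u},\,\boldsymbol{\theta}_0+\boldsymbol{v})=\tfrac{\phi''(1)}{2}(\boldsymbol{u}-\boldsymbol{v})^T\boldsymbol{J}(\boldsymbol{\theta}_0)(\boldsymbol{u}-\boldsymbol{v})+o(\|\boldsymbol{u}\|^2+\|\boldsymbol{v}\|^2),
\end{equation*}
which I would establish by computing the three Hessian blocks at the diagonal. The $\partial^2/\partial\boldsymbol{\theta}_1\partial\boldsymbol{\theta}_1^T$ block equals $\phi''(1)\boldsymbol{J}(\boldsymbol{\theta}_0)$ by exactly the calculation in the proof of Theorem \ref{Theorem1}; a parallel calculation, leaning on $\phi(1)=\phi'(1)=0$ and $\int\partial\mathcal{CL}/\partial\boldsymbol{\theta}\,d\boldsymbol{y}=\boldsymbol{0}_p$, handles the $(2,2)$ block, and the mixed block evaluates to $-\phi''(1)\boldsymbol{J}(\boldsymbol{\theta}_0)$. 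These three blocks assemble into the displayed quadratic form. Substituting $\boldsymbol{u}=\widehat{\boldsymbol{\theta}}_c-\boldsymbol{\theta}_0=O_P(n^{-1/2})$ and $\boldsymbol{v}=\widetilde{\boldsymbol{\theta}}_{rc}-\boldsymbol{\theta}_0=O_P(n^{-1/2})$ (the first by composite-likelihood asymptotics, the second by Theorem \ref{Theorem3}) yields
\begin{equation*}
T_{\phi,n}(\widehat{\boldsymbol{\theta}}_c,\widetilde{\boldsymbol{\theta}}_{rc})=n(\widehat{\boldsymbol{\theta}}_c-\widetilde{\boldsymbol{\theta}}_{rc})^T\boldsymbol{J}(\boldsymbol{\theta}_0)(\widehat{\boldsymbol{\theta}}_c-\widetilde{\boldsymbol{\theta}}_{rc})+o_P(1).
\end{equation*}

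Next I would invoke Lemma \ref{lemma2}, which gives $\sqrt{n}(\widehat{\boldsymbol{\theta}}_c-\widetilde{\boldsymbol{\theta}}_{rc})=-\boldsymbol{Q}(\boldsymbol{\theta}_0)\boldsymbol{G}^T(\boldsymbol{\theta}_0)\sqrt{n}(\widehat{\boldsymbol{\theta}}_c-\boldsymbol{\theta}_0)+o_P(1)$. Combined with $\sqrt{n}(\widehat{\boldsymbol{\theta}}_c-\boldsymbol{\theta}_0)\overset{\mathcal{L}}{\to}\mathcal{N}(\boldsymbol{0},\boldsymbol{G}_*^{-1}(\boldsymbol{\theta}_0))$, Slutsky's theorem yields
\begin{equation*}
\sqrt{n}(\widehat{\boldsymbol{\theta}}_c-\widetilde{\boldsymbol{\theta}}_{rc})\overset{\mathcal{L}}{\to}\mathcal{N}\bigl(\boldsymbol{0},\,\boldsymbol{Q}(\boldsymbol{\theta}_0)\boldsymbol{G}^T(\boldsymbol{\theta}_0)\boldsymbol{G}_*^{-1}(\boldsymbol{\theta}_0)\boldsymbol{G}(\boldsymbol{\theta}_0)\boldsymbol{Q}^T(\boldsymbol{\theta}_0)\bigr),
\end{equation*}
and the continuous mapping theorem identifies the limit of $T_{\phi,n}$ as $\boldsymbol{Z}^T\boldsymbol{J}(\boldsymbol{\theta}_0)\boldsymbol{Z}$ for $\boldsymbol{Z}$ drawn from this normal. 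The classical representation of a quadratic form in a centered normal vector as a weighted sum of independent $\chi_1^2$ variables, with weights equal to the nonzero eigenvalues of the product of the quadratic-form matrix and the covariance, then delivers the $\sum_{i=1}^k\beta_i Z_i^2$ limit asserted.

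The main obstacle is the matrix bookkeeping needed to match the eigenvalue matrix produced by this route with the one displayed in the statement: since the nonzero eigenvalues of a product are invariant under cyclic permutation of its factors, one must verify that the form written in the theorem is such a permutation of $\boldsymbol{J}(\boldsymbol{\theta}_0)\boldsymbol{Q}(\boldsymbol{\theta}_0)\boldsymbol{G}^T(\boldsymbol{\theta}_0)\boldsymbol{G}_*^{-1}(\boldsymbol{\theta}_0)\boldsymbol{G}(\boldsymbol{\theta}_0)\boldsymbol{Q}^T(\boldsymbol{\theta}_0)$, possibly by exploiting the explicit factorization $\boldsymbol{Q}=-\boldsymbol{H}^{-1}\boldsymbol{G}(\boldsymbol{G}^T\boldsymbol{H}^{-1}\boldsymbol{G})^{-1}$ and the identity $\boldsymbol{G}_*^{-1}=\boldsymbol{H}^{-1}\boldsymbol{J}\boldsymbol{H}^{-1}$. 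The remaining technicalities are routine: verifying the mixed-block second-derivative identity by direct differentiation (where the $r\phi'(r)$ cancellation at $r=1$ is the key step), showing that the Taylor remainder is genuinely $o_P(n^{-1})$ from the $\sqrt{n}$-consistency of both estimators, and reading off $k\le r$ as the rank of the assembled matrix under the standing positive-definiteness hypotheses on $\boldsymbol{H}$ and $\boldsymbol{J}$.
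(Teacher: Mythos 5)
Your proposal is correct and follows essentially the same route as the paper's own proof: a quadratic Taylor expansion reducing $T_{\phi,n}(\widehat{\boldsymbol{\theta}}_{c},\widetilde{\boldsymbol{\theta}}_{rc})$ to $n(\widehat{\boldsymbol{\theta}}_{c}-\widetilde{\boldsymbol{\theta}}_{rc})^{T}\boldsymbol{J}(\boldsymbol{\theta}_{0})(\widehat{\boldsymbol{\theta}}_{c}-\widetilde{\boldsymbol{\theta}}_{rc})+o_{P}(1)$, then Lemma \ref{lemma2} plus the asymptotic normality of $\widehat{\boldsymbol{\theta}}_{c}$, then the standard spectral representation of a Gaussian quadratic form (the paper cites Dik and de Gunst (1985) for this last step). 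The only genuine difference is cosmetic: you expand $D_{\phi}$ in both arguments around the true value, computing all three Hessian blocks, whereas the paper expands only in the first argument around the random point $\widetilde{\boldsymbol{\theta}}_{rc}$ and then lets $\boldsymbol{J}(\widetilde{\boldsymbol{\theta}}_{rc})\rightarrow\boldsymbol{J}(\boldsymbol{\theta}_{0})$; your version is slightly cleaner and your mixed-block computation is right. One substantive remark about the step you flag as the ``main obstacle'': you should not expect to reconcile your matrix with the displayed one, because they are not cyclic permutations of each other. Lemma \ref{lemma2} gives $\sqrt{n}(\widehat{\boldsymbol{\theta}}_{c}-\widetilde{\boldsymbol{\theta}}_{rc})=-\boldsymbol{Q}\boldsymbol{G}^{T}\sqrt{n}(\widehat{\boldsymbol{\theta}}_{c}-\boldsymbol{\theta})+o_{P}(1)$ (the sign is immaterial), so the limiting covariance is $\boldsymbol{Q}\boldsymbol{G}^{T}\boldsymbol{G}_{\ast}^{-1}\boldsymbol{G}\boldsymbol{Q}^{T}$ exactly as you derive, and the eigenvalue matrix should read $\boldsymbol{J}\boldsymbol{Q}\boldsymbol{G}^{T}\boldsymbol{G}_{\ast}^{-1}\boldsymbol{G}\boldsymbol{Q}^{T}$ (equivalently, the nonzero eigenvalues of the $r\times r$ matrix $(\boldsymbol{Q}^{T}\boldsymbol{J}\boldsymbol{Q})(\boldsymbol{G}^{T}\boldsymbol{G}_{\ast}^{-1}\boldsymbol{G})$). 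The form printed in the theorem, with $\boldsymbol{G}$ and $\boldsymbol{Q}$ interchanged, is an ordering slip that the paper's own appendix also commits when it passes from Lemma \ref{lemma2} to the covariance; your computation is the internally consistent one, so your proof is complete once you state that matrix rather than the printed one.
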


The proof of this Theorem is presented in Appendix C. In the following we
refer $T_{\phi ,n}(\widehat{\boldsymbol{\theta }}_{c}\mathbf{,}\widetilde{%
\boldsymbol{\theta }}_{rc})$ by composite $\phi $-divergence test-statistics
for testing composite null hypothesis.

\begin{remark}
For the testing problem considered in this Section it is perhaps well-known
the composite likelihood ratio test but it was not possible for us to find
it in the statistical literature. This test will be used in Section 4 and
this is the reason to develop the said test in the present remark. \newline
We shall denote%
\begin{equation*}
c\ell \left( \boldsymbol{\theta }\right) =\tsum\limits_{i=1}^{n}c\ell \left( 
\boldsymbol{\theta },\boldsymbol{y}_{i}\right)
\end{equation*}%
The composite likelihood ratio test for testing the composite null
hypothesis (\ref{H}), considered in this paper, is defined by%
\begin{equation*}
\lambda _{n}(\widehat{\boldsymbol{\theta }}_{c},\widetilde{\boldsymbol{%
\theta }}_{rc})=2\left( c\ell (\widehat{\boldsymbol{\theta }}_{c})-c\ell (%
\widetilde{\boldsymbol{\theta }}_{rc})\right) .
\end{equation*}%
A second order Taylor expansion gives%
\begin{align*}
& c\ell (\widetilde{\boldsymbol{\theta }}_{rc},\boldsymbol{y}_{i})-c\ell (%
\widehat{\boldsymbol{\theta }}_{c},\boldsymbol{y}_{i}) \\
& =\left. \frac{\partial c\ell \left( \boldsymbol{\theta },\boldsymbol{y}%
_{i}\right) }{\partial \boldsymbol{\theta }}\right\vert _{\boldsymbol{\theta 
}=\widehat{\boldsymbol{\theta }}_{c}}(\widehat{\boldsymbol{\theta }}_{c}-%
\widetilde{\boldsymbol{\theta }}_{rc})+\frac{1}{2}(\widehat{\boldsymbol{%
\theta }}_{c}-\widetilde{\boldsymbol{\theta }}_{rc})^{T}\left. \frac{%
\partial ^{2}c\ell \left( \boldsymbol{\theta },\boldsymbol{y}_{i}\right) }{%
\partial \boldsymbol{\theta }\partial \boldsymbol{\theta }^{T}}\right\vert _{%
\boldsymbol{\theta }=\widehat{\boldsymbol{\theta }}_{c}}(\widehat{%
\boldsymbol{\theta }}_{c}-\widetilde{\boldsymbol{\theta }}_{rc})+o_{P}(1).
\end{align*}%
But,%
\begin{equation*}
\left. \frac{\partial c\ell \left( \boldsymbol{\theta },\boldsymbol{y}%
_{i}\right) }{\partial \boldsymbol{\theta }}\right\vert _{\boldsymbol{\theta 
}=\widehat{\boldsymbol{\theta }}_{c}}=\boldsymbol{0}_{p}\text{\quad and\quad 
}\frac{1}{n}\tsum\limits_{i=1}^{n}\left. \frac{\partial ^{2}c\ell \left( 
\boldsymbol{\theta },\boldsymbol{y}_{i}\right) }{\partial \boldsymbol{\theta 
}\partial \boldsymbol{\theta }^{T}}\right\vert _{\boldsymbol{\theta }=%
\widehat{\boldsymbol{\theta }}_{c}}\overset{a.s.}{\underset{n\rightarrow
\infty }{\longrightarrow }}-\boldsymbol{H}(\boldsymbol{\theta }_{0}),
\end{equation*}%
and therefore, 
\begin{equation*}
2(c\ell (\widehat{\boldsymbol{\theta }}_{c})-c\ell (\widetilde{\boldsymbol{%
\theta }}_{rc}))=\sqrt{n}(\widehat{\boldsymbol{\theta }}_{c}-\widetilde{%
\boldsymbol{\theta }}_{rc})^{T}\boldsymbol{H}(\boldsymbol{\theta }_{0})\sqrt{%
n}(\widehat{\boldsymbol{\theta }}_{c}-\widetilde{\boldsymbol{\theta }}%
_{rc})+o_{P}(1)
\end{equation*}%
which yields%
\begin{equation}
\lambda _{n}(\widehat{\boldsymbol{\theta }}_{c},\widetilde{\boldsymbol{%
\theta }}_{rc})=2(c\ell (\widehat{\boldsymbol{\theta }}_{c})-c\ell (%
\widetilde{\boldsymbol{\theta }}_{rc}))\overset{\mathcal{L}}{\underset{%
n\rightarrow \infty }{\longrightarrow }}\dsum\limits_{i=1}^{\ell }\gamma
_{i}Z_{i}^{2},  \label{R1}
\end{equation}%
where $\gamma _{i},$ $i=1,...,\ell $ are the non null eigenvalues of the
matrix%
\begin{equation*}
\boldsymbol{H}(\boldsymbol{\theta }_{0})\boldsymbol{G}(\boldsymbol{\theta })%
\boldsymbol{Q}^{T}(\boldsymbol{\theta })\boldsymbol{G}_{\ast }^{-1}(%
\boldsymbol{\theta })\boldsymbol{Q}(\boldsymbol{\theta })\boldsymbol{G}^{T}(%
\boldsymbol{\theta }),
\end{equation*}%
with%
\begin{equation*}
\ell =\mathrm{rank}\left( \boldsymbol{G}(\boldsymbol{\theta })\boldsymbol{Q}%
^{T}(\boldsymbol{\theta })\boldsymbol{G}_{\ast }^{-1}(\boldsymbol{\theta })%
\boldsymbol{Q}(\boldsymbol{\theta })\boldsymbol{G}(\boldsymbol{\theta })^{T}%
\boldsymbol{H}(\boldsymbol{\theta })\boldsymbol{G}(\boldsymbol{\theta })%
\boldsymbol{Q}^{T}(\boldsymbol{\theta })\boldsymbol{G}_{\ast }^{-1}(%
\boldsymbol{\theta })\boldsymbol{Q}(\boldsymbol{\theta })\boldsymbol{G}^{T}(%
\boldsymbol{\theta })\right) ,
\end{equation*}%
and $Z_{i},$ $i=1,...,\ell $ are independent standard normal random
variables.
\end{remark}

\begin{remark}
In order to avoid the problem of getting percentiles or probabilities from
the distribution of linear combinations of chi-squares we are going to
present some adjusted composite likelihood $\phi $-divergence
test-statistics.

Following Corollary 1 of Rao and Scott (1981) one can use the statistic 
\begin{equation*}
^{1}T_{\phi ,n}(\widehat{\boldsymbol{\theta }}_{c}\mathbf{,}\widetilde{%
\boldsymbol{\theta }}_{rc})=\frac{T_{\phi ,n}(\widehat{\boldsymbol{\theta }}%
_{c}\mathbf{,}\widetilde{\boldsymbol{\theta }}_{rc})}{\lambda _{\max }}\leq {%
\sum\limits_{i=1}^{r}}Z_{i}^{2},
\end{equation*}%
where $\lambda _{\max }=\max \left( \beta _{1},...,\beta _{r}\right) $. As ${%
\sum\limits_{i=1}^{r}}Z_{i}^{2}\sim \chi _{r}^{2}$, a strategy that rejects
the null hypothesis $H_{0}:\boldsymbol{g}(\boldsymbol{\theta })=\boldsymbol{0%
}_{r}$ for $^{1}T_{\phi ,n}(\widehat{\boldsymbol{\theta }}_{c}\mathbf{,}%
\widetilde{\boldsymbol{\theta }}_{rc})>\chi _{r,1-\alpha }^{2}$ produces an
asymptotically conservative test at $\alpha $ nominal level , where $\chi
_{r,1-\alpha }^{2}$ is the quantile of order $1-\alpha $ for $\chi _{r}^{2}$.

Another approximation to the asymptotic tail probabilities of $T_{\phi ,n}(%
\widehat{\boldsymbol{\theta }}_{c}\mathbf{,}\widetilde{\boldsymbol{\theta }}%
_{rc})$ can be obtained through the modification 
\begin{equation*}
^{2}T_{\phi ,n}(\widehat{\boldsymbol{\theta }}_{c}\mathbf{,}\widetilde{%
\boldsymbol{\theta }}_{rc})=\frac{T_{\phi ,n}(\widehat{\boldsymbol{\theta }}%
_{c}\mathbf{,}\widetilde{\boldsymbol{\theta }}_{rc})}{\overline{\lambda }},
\end{equation*}%
where $\bar{\lambda}=\frac{1}{r}\sum_{i=1}^{r}\beta _{i}$ (see
Satterthwaite, 1946), considered approximated by a chi-squared distribution
with $r$ degrees of freedom. In this case we can observe that%
\begin{align*}
E\left[ ^{2}T_{\phi ,n}(\widehat{\boldsymbol{\theta }}_{c}\mathbf{,}%
\widetilde{\boldsymbol{\theta }}_{rc})\right] & =r=E\left[ \chi _{r}^{2}%
\right] , \\
Var\left[ ^{2}T_{\phi ,n}(\widehat{\boldsymbol{\theta }}_{c}\mathbf{,}%
\widetilde{\boldsymbol{\theta }}_{rc})\right] & =\frac{2\sum_{i=1}^{r}\beta
_{i}^{2}}{\overline{\lambda }^{2}}=2r+2{\sum\limits_{i=1}^{r}}\frac{\left(
\beta _{i}-\overline{\lambda }\right) ^{2}}{\overline{\lambda }^{2}}>2r=Var%
\left[ \chi _{r}^{2}\right] .
\end{align*}%
If we denote by $\Lambda =\mathrm{diag}\left( \beta _{1},...,\beta
_{r}\right) $, we get 
\begin{equation*}
E\left[ {\sum\limits_{i=1}^{k}}\beta _{i}Z_{i}^{2}\right] ={%
\sum\limits_{i=1}^{k}}\beta _{i}=\mathrm{trace}\left( \Lambda \right) =%
\mathrm{trace}\left( \boldsymbol{A}\left( \boldsymbol{\theta }\right) 
\boldsymbol{G}_{\ast }(\boldsymbol{\theta })\right) .
\end{equation*}%
The test given by the statistic $^{2}T_{\phi ,n}(f_{\widehat{\boldsymbol{%
\theta }}_{c}},f_{\widehat{\boldsymbol{\theta }}_{R}})$ is more conservative
than the one based on 
\begin{equation*}
^{3}T_{\phi ,n}(\widehat{\boldsymbol{\theta }}_{c}\mathbf{,}\widetilde{%
\boldsymbol{\theta }}_{rc})=\frac{^{2}T_{\phi ,n}(\widehat{\boldsymbol{%
\theta }}_{c}\mathbf{,}\widetilde{\boldsymbol{\theta }}_{rc})}{\nu }=\frac{%
T_{\phi ,n}(\widehat{\boldsymbol{\theta }}_{c}\mathbf{,}\widetilde{%
\boldsymbol{\theta }}_{rc})}{\nu \overline{\lambda }},
\end{equation*}%
and we can find $\nu $ by imposing the condition $Var\left[ ^{3}T_{\phi ,n}(%
\widehat{\boldsymbol{\theta }}_{c}\mathbf{,}\widetilde{\boldsymbol{\theta }}%
_{rc})\right] =2E\left[ ^{3}T_{\phi ,n}(\widehat{\boldsymbol{\theta }}_{c}%
\mathbf{,}\widetilde{\boldsymbol{\theta }}_{rc})\right] ,$ as in the
chi-squared distribution. Since%
\begin{equation*}
E\left[ ^{3}T_{\phi ,n}(\widehat{\boldsymbol{\theta }}_{c}\mathbf{,}%
\widetilde{\boldsymbol{\theta }}_{rc})\right] =\frac{r}{\nu }\text{ and }Var%
\left[ ^{3}T_{\phi ,n}(\widehat{\boldsymbol{\theta }}_{c}\mathbf{,}%
\widetilde{\boldsymbol{\theta }}_{rc})\right] =\frac{2r}{\nu },
\end{equation*}%
\begin{equation*}
\nu =1+{\sum\limits_{i=1}^{r}}\frac{\left( \beta _{i}-\overline{\lambda }%
\right) ^{2}}{r\overline{\lambda }^{2}}=1+CV^{2}(\{\beta _{i}\}_{i=1}^{r}),
\end{equation*}%
where $CV$ represents the coefficient of variation. Then a chi-square
distribution with $\frac{r}{\nu }$ degrees of freedom approximates the
asymptotic distribution of the statistic $^{3}T_{\phi ,n}(\widehat{%
\boldsymbol{\theta }}_{c}\mathbf{,}\widetilde{\boldsymbol{\theta }}_{rc})$
for large $n$.

The degrees of freedom of $^{3}T_{\phi ,n}(\widehat{\boldsymbol{\theta }}_{c}%
\mathbf{,}\widetilde{\boldsymbol{\theta }}_{rc})$ is $\frac{k}{\nu }$, which
may not be an integer. To avoid this difficulty one can modify the statistic
such that the first two moments match specifically with the $\chi _{r}^{2}$
distribution (rather than with just any other $\chi ^{2}$ distribution).
Specifically let 
\begin{equation}
X=\text{ }^{2}T_{\phi ,n}(\widehat{\boldsymbol{\theta }}_{c}\mathbf{,}%
\widetilde{\boldsymbol{\theta }}_{rc}).  \notag
\end{equation}%
We have 
\begin{align*}
E\left[ X\right] & =r=E\left[ \chi _{r}^{2}\right] , \\
Var[X]& =\frac{2{\sum\limits_{i=1}^{r}}\beta _{i}{}^{2}}{\overline{\lambda }%
^{2}}=2r+2{\sum\limits_{i=1}^{r}}\frac{\left( \beta _{i}-\overline{\lambda }%
\right) ^{2}}{\overline{\lambda }^{2}}=2r+c,
\end{align*}%
where $c$ stands for the last indicated term in the previous expression. We
define $Y=(X-a)/b$, where the constants $a$ and $b$ are such that 
\begin{equation*}
E(Y)=r,\quad Var(Y)=2r.
\end{equation*}%
Thus, 
\begin{equation*}
\frac{r-a}{b}=r,\quad \frac{2r+c}{b^{2}}=2r.
\end{equation*}%
Solving these equations, we get 
\begin{equation*}
b=\sqrt{1+\frac{c}{2r}},\quad a=r(1-b).
\end{equation*}%
Thus it makes sense to consider another modification of the statistic given
by 
\begin{equation*}
^{4}T_{\phi ,n}(\widehat{\boldsymbol{\theta }}_{c}\mathbf{,}\widetilde{%
\boldsymbol{\theta }}_{rc})=\frac{^{2}T_{\phi ,n}(\widehat{\boldsymbol{%
\theta }}_{c}\mathbf{,}\widetilde{\boldsymbol{\theta }}_{rc})-a}{b},
\end{equation*}%
the large sample distribution of which may be approximated by the $\chi
_{r}^{2}$ distribution.
\end{remark}

The approximation presented in this remark for the asymptotic distribution
of the $\phi $-divergence test-statistics, $T_{\phi ,n}(\widehat{\boldsymbol{%
\theta }}_{c}\mathbf{,}\widetilde{\boldsymbol{\theta }}_{rc})$, can be used
in the approximation of the $\phi $-divergence test-statistics $T_{\phi ,n}(%
\widehat{\boldsymbol{\theta }}_{c}\mathbf{,}\boldsymbol{\theta }_{0})$ as
well.

By the previous theorem, the null hypothesis should be rejected if $T_{\phi
,n}(\widehat{\boldsymbol{\theta }}_{c}\mathbf{,}\widetilde{\boldsymbol{%
\theta }}_{rc})\geq c_{\alpha }$, where $c_{\alpha }$ is the quantile of
order $1-\alpha $ of the asymptotic distribution of $T_{\phi ,n}(\widehat{%
\boldsymbol{\theta }}_{c}\mathbf{,}\widetilde{\boldsymbol{\theta }}_{rc})$.
The following theorem can be used to approximate the power function. Assume
that $\boldsymbol{\theta }\notin \Theta _{0}$ is the true value of the
parameter so that $\widehat{\boldsymbol{\theta }}_{c}\overset{a.s.}{\underset%
{n\rightarrow \infty }{\longrightarrow }}\boldsymbol{\theta }$ and that
there exists $\boldsymbol{\theta }^{\ast }\in \Theta _{0}$ such that the
restricted maximum composite likelihood estimator satisfies $\widetilde{%
\boldsymbol{\theta }}_{rc}\overset{a.s.}{\underset{n\rightarrow \infty }{%
\longrightarrow }}\boldsymbol{\theta }^{\ast }$, as well as 
\begin{equation*}
n^{1/2}\left( (\widehat{\boldsymbol{\theta }}_{c},\widetilde{\boldsymbol{%
\theta }}_{rc})-\left( \boldsymbol{\theta },\boldsymbol{\theta }^{\ast
}\right) \right) \overset{\mathcal{L}}{\underset{n\rightarrow \infty }{%
\longrightarrow }}\mathcal{N}\left( \left( 
\begin{array}{l}
\boldsymbol{0}_{p} \\ 
\boldsymbol{0}_{p}%
\end{array}%
\right) ,\left( 
\begin{array}{ll}
\boldsymbol{G}_{\ast }^{-1}(\boldsymbol{\theta }) & \boldsymbol{A}_{12}(%
\boldsymbol{\theta },\boldsymbol{\theta }^{\ast }) \\ 
\boldsymbol{A}_{12}^{T}(\boldsymbol{\theta },\boldsymbol{\theta }^{\ast }) & 
\boldsymbol{\Sigma }\left( \boldsymbol{\theta },\boldsymbol{\theta }^{\ast
}\right)%
\end{array}%
\right) \right) ,
\end{equation*}%
where $\boldsymbol{A}_{12}(\boldsymbol{\theta },\boldsymbol{\theta }^{\ast
}) $ and $\boldsymbol{\Sigma }\left( \boldsymbol{\theta },\boldsymbol{\theta 
}^{\ast }\right) $ are appropriate $p\times p$ matrices . We have then the
following result.

\begin{theorem}
Under $H_{1}$ we have 
\begin{equation*}
n^{1/2}\left( D_{\phi }(\widehat{\boldsymbol{\theta }}_{c},\widetilde{%
\boldsymbol{\theta }}_{rc})-D_{\phi }(\boldsymbol{\theta },\boldsymbol{%
\theta }^{\ast })\right) \overset{\mathcal{L}}{\underset{n\rightarrow \infty 
}{\longrightarrow }}\mathcal{N}\left( 0,\sigma ^{2}\left( \mathbf{%
\boldsymbol{\theta }},\boldsymbol{\theta }^{\ast }\right) \right)
\end{equation*}%
where 
\begin{equation}
\sigma ^{2}\left( \mathbf{\boldsymbol{\theta }},\boldsymbol{\theta }^{\ast
}\right) =\boldsymbol{t}^{T}\boldsymbol{G}_{\ast }^{-1}(\boldsymbol{\theta })%
\boldsymbol{t+}2\boldsymbol{t}^{T}\boldsymbol{A}_{12}(\boldsymbol{\theta },%
\boldsymbol{\theta }^{\ast })\boldsymbol{s+s}^{T}\boldsymbol{\Sigma }\left( 
\boldsymbol{\theta },\boldsymbol{\theta }^{\ast }\right) \boldsymbol{s}
\label{P1}
\end{equation}%
being%
\begin{equation*}
\boldsymbol{t=}\left. \frac{\partial D_{\phi }(\boldsymbol{\theta }_{1},%
\boldsymbol{\theta }^{\ast })}{\partial \boldsymbol{\theta }_{1}}\right\vert
_{\boldsymbol{\theta }_{1}=\mathbf{\boldsymbol{\theta }}}\text{ and }%
\boldsymbol{s=}\left. \frac{\partial D_{\phi }(\boldsymbol{\theta }_{2},%
\boldsymbol{\theta }^{\ast })}{\partial \boldsymbol{\theta }_{2}}\right\vert
_{\boldsymbol{\theta }_{2}=\boldsymbol{\theta }^{\ast }}.
\end{equation*}
\end{theorem}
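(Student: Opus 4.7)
The proof plan is a bivariate delta-method argument, directly parallel to Theorem~\ref{Theorem2} but now with the second argument of $D_\phi$ being random rather than fixed. I would first write a joint first-order Taylor expansion of $D_\phi(\boldsymbol{\theta}_1,\boldsymbol{\theta}_2)$ around the limit point $(\boldsymbol{\theta},\boldsymbol{\theta}^{\ast})$, obtaining
\begin{equation*}
D_\phi(\widehat{\boldsymbol{\theta}}_c,\widetilde{\boldsymbol{\theta}}_{rc})
= D_\phi(\boldsymbol{\theta},\boldsymbol{\theta}^{\ast})
+ \boldsymbol{t}^T(\widehat{\boldsymbol{\theta}}_c-\boldsymbol{\theta})
+ \boldsymbol{s}^T(\widetilde{\boldsymbol{\theta}}_{rc}-\boldsymbol{\theta}^{\ast})
+ o\!\left(\|\widehat{\boldsymbol{\theta}}_c-\boldsymbol{\theta}\|+\|\widetilde{\boldsymbol{\theta}}_{rc}-\boldsymbol{\theta}^{\ast}\|\right),
\end{equation*}
where the gradient blocks $\boldsymbol{t}$ and $\boldsymbol{s}$ are exactly the vectors defined in the statement. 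Because we are under $H_1$ with $\boldsymbol{\theta}\neq\boldsymbol{\theta}^{\ast}$, neither gradient vanishes by the argument used in Theorem~\ref{Theorem1} (the zero gradient of $D_\phi$ at coinciding arguments is specific to that case), so the linear term is genuinely the leading contribution.

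Next I would multiply by $\sqrt{n}$ and invoke the hypothesized joint asymptotic normality of $(\widehat{\boldsymbol{\theta}}_c,\widetilde{\boldsymbol{\theta}}_{rc})$ with covariance blocks $\boldsymbol{G}_\ast^{-1}(\boldsymbol{\theta})$, $\boldsymbol{A}_{12}(\boldsymbol{\theta},\boldsymbol{\theta}^{\ast})$ and $\boldsymbol{\Sigma}(\boldsymbol{\theta},\boldsymbol{\theta}^{\ast})$. Since $\widehat{\boldsymbol{\theta}}_c-\boldsymbol{\theta}$ and $\widetilde{\boldsymbol{\theta}}_{rc}-\boldsymbol{\theta}^{\ast}$ are each $O_p(n^{-1/2})$, the remainder term, when scaled by $\sqrt n$, is $o_p(1)$. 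Applying Slutsky's lemma then yields
\begin{equation*}
\sqrt{n}\bigl(D_\phi(\widehat{\boldsymbol{\theta}}_c,\widetilde{\boldsymbol{\theta}}_{rc})-D_\phi(\boldsymbol{\theta},\boldsymbol{\theta}^{\ast})\bigr)
= \begin{pmatrix}\boldsymbol{t}\\ \boldsymbol{s}\end{pmatrix}^{\!T}\!\sqrt{n}\!\begin{pmatrix}\widehat{\boldsymbol{\theta}}_c-\boldsymbol{\theta}\\ \widetilde{\boldsymbol{\theta}}_{rc}-\boldsymbol{\theta}^{\ast}\end{pmatrix} + o_p(1),
\end{equation*}
which is asymptotically normal with mean $0$ and variance
\begin{equation*}
\begin{pmatrix}\boldsymbol{t}\\ \boldsymbol{s}\end{pmatrix}^{\!T}\!\begin{pmatrix}\boldsymbol{G}_\ast^{-1}(\boldsymbol{\theta}) & \boldsymbol{A}_{12}(\boldsymbol{\theta},\boldsymbol{\theta}^{\ast})\\ \boldsymbol{A}_{12}^T(\boldsymbol{\theta},\boldsymbol{\theta}^{\ast}) & \boldsymbol{\Sigma}(\boldsymbol{\theta},\boldsymbol{\theta}^{\ast})\end{pmatrix}\!\begin{pmatrix}\boldsymbol{t}\\ \boldsymbol{s}\end{pmatrix},
\end{equation*}
which is precisely the expression~(\ref{P1}) for $\sigma^2(\boldsymbol{\theta},\boldsymbol{\theta}^{\ast})$ after expanding the block quadratic form.

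The routine parts are the Taylor expansion of $D_\phi$ (whose derivatives are computed exactly as in Theorems~\ref{Theorem1} and~\ref{Theorem2}, differentiating under the integral sign and using the regularity assumptions cited there) and the mechanical block-matrix expansion of the quadratic form. The genuinely non-trivial ingredient — and what I would regard as the main obstacle if it had to be proved from scratch — is the joint asymptotic normality of $(\widehat{\boldsymbol{\theta}}_c,\widetilde{\boldsymbol{\theta}}_{rc})$ with explicit cross-covariance $\boldsymbol{A}_{12}(\boldsymbol{\theta},\boldsymbol{\theta}^{\ast})$, because under $H_1$ the restricted estimator converges to the pseudo-true point $\boldsymbol{\theta}^{\ast}\in\Theta_0$ that minimises a Kullback-type contrast rather than to the true parameter. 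However, since this joint normality is explicitly assumed in the hypotheses of the theorem, the proof itself reduces to the delta-method argument sketched above.
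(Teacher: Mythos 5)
Your proof is correct and follows essentially the same route as the paper: a first-order Taylor expansion of $D_{\phi}$ about $(\boldsymbol{\theta},\boldsymbol{\theta}^{\ast})$ followed by the delta method applied to the assumed joint asymptotic normality of $(\widehat{\boldsymbol{\theta}}_{c},\widetilde{\boldsymbol{\theta}}_{rc})$, giving the block quadratic form that expands to (\ref{P1}). The only caveat is that $\boldsymbol{s}$ must be understood as the gradient of $D_{\phi}$ with respect to its \emph{second} argument evaluated at $(\boldsymbol{\theta},\boldsymbol{\theta}^{\ast})$ --- as your expansion implicitly requires --- since the formula in the statement read literally (derivative in the first argument at coinciding arguments) would give $\boldsymbol{s}=\boldsymbol{0}$ by the computation in Theorem \ref{Theorem1}.
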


\begin{proof}
The result follows in a straightforward manner by considering a first order
Taylor expansion of $D_{\phi }(\widehat{\boldsymbol{\theta }}_{c},\widetilde{%
\boldsymbol{\theta }}_{rc})$, which yields 
\begin{equation*}
D_{\phi }(\widehat{\boldsymbol{\theta }}_{c},\widetilde{\boldsymbol{\theta }}%
_{rc})=D_{\phi }(\boldsymbol{\theta },\boldsymbol{\theta }^{\ast })+%
\boldsymbol{t}^{T}(\widehat{\boldsymbol{\theta }}_{c}-\boldsymbol{\theta })+%
\boldsymbol{s}^{T}(\widetilde{\boldsymbol{\theta }}_{rc}-\boldsymbol{\theta }%
^{\ast })+o(\left\Vert \widehat{\boldsymbol{\theta }}_{c}-\boldsymbol{\theta 
}^{\ast }\right\Vert +\left\Vert \widetilde{\boldsymbol{\theta }}_{rc}-%
\boldsymbol{\theta }_{0}\right\Vert ).
\end{equation*}
\end{proof}

\begin{remark}
On the basis of the previous theorem we can get an approximation of the
power function 
\begin{equation*}
\pi _{n}^{\phi }\left( \mathbf{\boldsymbol{\theta }}\right) =\text{Pr}_{%
\mathbf{\boldsymbol{\theta }}}\left( T_{\phi ,n}(\widehat{\boldsymbol{\theta 
}}_{c},\widetilde{\boldsymbol{\theta }}_{rc})\geq c\right)
\end{equation*}%
at $\boldsymbol{\theta }$ as 
\begin{equation}
\pi _{n,\alpha }^{\beta ,\gamma }\left( \mathbf{\boldsymbol{\theta }}\right)
=1-\Phi \left( \frac{n^{1/2}}{\sigma \left( \mathbf{\boldsymbol{\theta }},%
\boldsymbol{\theta }^{\ast }\right) }\left( \frac{c}{2n}-D_{\phi }(%
\boldsymbol{\theta },\boldsymbol{\theta }^{\ast })\right) \right) ,
\label{P2}
\end{equation}%
where $\Phi (x)$ is the standard normal distribution function and $\sigma
^{2}\left( \mathbf{\boldsymbol{\theta }},\boldsymbol{\theta }^{\ast }\right) 
$ was defined in (\ref{P1}).

If some $\mathbf{\boldsymbol{\theta }}$ $\neq \boldsymbol{\theta }^{\ast }$
is the true parameter, then the probability of rejecting $H_{0}$ with the
rule that it is rejected when $T_{\phi ,n}(\widehat{\boldsymbol{\theta }}%
_{c},\widetilde{\boldsymbol{\theta }}_{rc})\geq c$ for a fixed test size $%
\alpha $ tends to one as $n\rightarrow \infty $. The test-statistic is
consistent in the Fraser's sense.

Obtaining the approximate sample size $n$ to guarantee a power of $\pi $ at
a given alternative $\mathbf{\boldsymbol{\theta }}^{\ast }$ is an
interesting application of formula (\ref{P2}). Let $n^{\ast }$ be the
positive root of the equation (\ref{P2}), i.e.%
\begin{equation*}
n^{\ast }=\frac{A+B+\sqrt{A(A+2B)}}{2D_{\phi }^{2}(\boldsymbol{\theta },%
\boldsymbol{\theta }^{\ast })},
\end{equation*}%
where 
\begin{equation*}
A=\sigma ^{2}\left( \mathbf{\boldsymbol{\theta }},\boldsymbol{\theta }^{\ast
}\right) \left( \Phi ^{-1}\left( 1-\pi \right) \right) ^{2}\quad \text{and}%
\quad B=cD_{\phi }(f_{\boldsymbol{\theta }},f_{\boldsymbol{\theta }^{\ast
}}).
\end{equation*}%
Then the required sample size is $n=\left[ n^{\ast }\right] +1$, where $%
\left[ \cdot \right] $ is used to denote \textquotedblleft integer part
of\textquotedblright .
\end{remark}

\begin{remark}
The class of $\phi $-divergence measures is a wide family of divergence
measures but unfortunately there are some classical divergence measures that
are not included in this family of $\phi $-divergence measures such as the R%
\'{e}nyi's divergence or the Sharma and Mittal's divergence. The expression
of R\'{e}nyi's divergence is given by 
\begin{equation*}
D_{\text{\textrm{R\'{e}nyi}}}^{a}(\widehat{\boldsymbol{\theta }}_{c}\mathbf{,%
}\widetilde{\boldsymbol{\theta }}_{rc})=\frac{1}{a\left( a-1\right) }\log
\dint_{R^{m}}\mathcal{CL}^{a}(\widehat{\boldsymbol{\theta }}_{c}\mathbf{,}%
\boldsymbol{y}\mathbf{)}\mathcal{CL}^{1-a}(\widetilde{\boldsymbol{\theta }}%
_{rc}\mathbf{,}\boldsymbol{y}\mathbf{)}d\boldsymbol{y}\text{ , if }a\neq 0,1,
\end{equation*}%
with 
\begin{equation*}
D_{\text{\textrm{R\'{e}nyi}}}^{0}(\widehat{\boldsymbol{\theta }}_{c},%
\widetilde{\boldsymbol{\theta }}_{rc})=\lim_{a\rightarrow 0}D_{\text{\textrm{%
R\'{e}nyi}}}^{a}(\widehat{\boldsymbol{\theta }}_{c}\mathbf{,}\widetilde{%
\boldsymbol{\theta }}_{rc})=D_{\mathrm{Kull}}(\widetilde{\boldsymbol{\theta }%
}_{rc},\widehat{\boldsymbol{\theta }}_{c})=\dint_{R^{m}}\mathcal{CL}(%
\widehat{\boldsymbol{\theta }}_{c}\mathbf{,}\boldsymbol{y}\mathbf{)}\log 
\frac{\mathcal{CL}(\widehat{\boldsymbol{\theta }}_{c}\mathbf{,}\boldsymbol{y}%
\mathbf{)}}{\mathcal{CL}(\widetilde{\boldsymbol{\theta }}_{rc}\mathbf{,}%
\boldsymbol{y}\mathbf{)}}d\boldsymbol{y}
\end{equation*}%
and 
\begin{equation*}
D_{\text{\textrm{R\'{e}nyi}}}^{1}(\widehat{\boldsymbol{\theta }}_{c},%
\widetilde{\boldsymbol{\theta }}_{rc})=\lim_{a\rightarrow 1}D_{\text{\textrm{%
R\'{e}nyi}}}(\widehat{\boldsymbol{\theta }}_{c}\mathbf{,}\widetilde{%
\boldsymbol{\theta }}_{rc})=D_{\mathrm{Kull}}(\widehat{\boldsymbol{\theta }}%
_{c}\mathbf{,}\widetilde{\boldsymbol{\theta }}_{rc}).
\end{equation*}%
This measure of divergence was introduced in R\'{e}nyi (1961) for $a>0$ and $%
a\neq 1$ and Liese and Vajda (1987) extended it for all $a\neq 1,0$. An
interesting divergence measure related to R\'{e}nyi divergence measure is
the Bhattacharya divergence defined as the R\'{e}nyi divergence for $a=1/2$
divided by $4$. Other interesting example of divergence measure, not
included in the family of $\phi $-divergence measures, is the divergence
measures introduced by Sharma and Mittal (1997).
\end{remark}

In order to unify the previous divergence measures as well as another
divergence measures Men\'{e}ndez et al. (1995, 1997) introduced the family
of divergences called \textquotedblleft $(h,\phi )$-divergence
measures\textquotedblright\ in the following way%
\begin{equation*}
D_{\phi }^{h}(\widehat{\boldsymbol{\theta }}_{c},\widetilde{\boldsymbol{%
\theta }}_{rc})=h\left( D_{\phi }(\widehat{\boldsymbol{\theta }}_{c},%
\widetilde{\boldsymbol{\theta }}_{rc})\right) ,
\end{equation*}%
where $h$ is a differentiable increasing function mapping from $\left[
0,\phi \left( 0\right) +\lim_{t\rightarrow \infty }\frac{\phi \left(
t\right) }{t}\right] $ onto $\left[ 0,\infty \right) $, with $h(0)=0$, $%
h^{\prime }(0)>0$, and $\phi \in \Psi $. In the next table these divergence
measures are presented, along with the corresponding expressions of $h$ and $%
\phi $.\newline
\begin{table}[htbp] \tabcolsep0.8pt  \centering%
$%
\begin{tabular}{ccccc}
\hline
Divergence & \hspace*{0.5cm} & $h\left( x\right) $ & \hspace*{0.5cm} & $\phi
\left( x\right) $ \\ \hline
\multicolumn{1}{l}{R\'{e}nyi} &  & \multicolumn{1}{l}{$\frac{1}{a\left(
a-1\right) }\log \left( a\left( a-1\right) x+1\right) ,\quad a\neq 0,1$} & 
& \multicolumn{1}{l}{$\frac{x^{a}-a\left( x-1\right) -1}{a\left( a-1\right) }%
,\quad a\neq 0,1$} \\ 
\multicolumn{1}{l}{Sharma-Mittal} &  & \multicolumn{1}{l}{$\frac{1}{b-1}%
\left\{ [1+a\left( a-1\right) x]^{\frac{b-1}{a-1}}-1\right\} ,\quad b,a\neq
1 $} &  & \multicolumn{1}{l}{$\frac{x^{a}-a\left( x-1\right) -1}{a\left(
a-1\right) },\quad a\neq 0,1$} \\ \hline
\end{tabular}%
\ $%
\end{table}%
\newline

Based on the $(h,\phi )$-divergence measures we can define a new family of $%
(h,\phi )$-divergence test-statistics for testing the null hypothesis $H_{0}$
given in (\ref{H}) 
\begin{equation}
T_{h,\phi ,n}(\widehat{\boldsymbol{\theta }}_{c},\widetilde{\boldsymbol{%
\theta }}_{rc})=\frac{2n}{\phi ^{\prime \prime }(1)h^{\prime }(0)}h\left(
D_{\phi }(\widehat{\boldsymbol{\theta }}_{c}\mathbf{,}\widetilde{\boldsymbol{%
\theta }}_{rc})\right) .  \label{Ah}
\end{equation}%
Since%
\begin{equation*}
h(x)=h(0)+h^{\prime }(0)x+o(x)
\end{equation*}%
the asymptotic distribution of $T_{h,\phi ,n}(\widehat{\boldsymbol{\theta }}%
_{c},\widetilde{\boldsymbol{\theta }}_{rc})$ coincides with the asymptotic
distribution of $T_{\phi ,n}(\widehat{\boldsymbol{\theta }}_{c},\widetilde{%
\boldsymbol{\theta }}_{rc})$. In a similar way we can define the family of $%
(h,\phi )$-divergence test-statistics for testing the null hypothesis $H_{0}$
given in (\ref{H1}) by%
\begin{equation*}
T_{h,\phi ,n}(\widehat{\boldsymbol{\theta }}_{c}\mathbf{,}\boldsymbol{\theta 
}_{0})=\frac{2n}{\phi ^{\prime \prime }(1)h^{\prime }(0)}h\left( D_{\phi }(%
\widehat{\boldsymbol{\theta }}_{c}\mathbf{,}\boldsymbol{\theta }_{0})\right)
.
\end{equation*}

\section{Numerical Example\label{Sec6}}

In this section we shall consider an example, studied previously by Xu and
Reid (2011) on the robustness of maximum composite estimator. The aim of
this section is to clarify the different issues which are discussed in the
previous sections.

Consider the random vector $\boldsymbol{Y}=(Y_{1},Y_{2},Y_{3},Y_{4})^{T}$
which follows a four dimensional normal distribution with mean vector $%
\boldsymbol{\mu }=(\mu _{1},\mu _{2},\mu _{3},\mu _{4})^{T}$ and
variance-covariance matrix%
\begin{equation}
\Sigma =\left( 
\begin{array}{cccc}
1 & \rho & 2\rho & 2\rho \\ 
\rho & 1 & 2\rho & 2\rho \\ 
2\rho & 2\rho & 1 & \rho \\ 
2\rho & 2\rho & \rho & 1%
\end{array}%
\right) ,  \label{e4.1}
\end{equation}%
i.e., we suppose that the correlation between $Y_{1}$ and $Y_{2}$ is the
same as the correlation between $Y_{3}$ and $Y_{4}$. Taking into account
that $\Sigma $\ must be semi-definite positive, the following condition is
imposed, $-\frac{1}{5}\leq \rho \leq \frac{1}{3}$. In order to avoid several
problems regarding the consistency of the maximum likelihood estimator of
the parameter $\rho $ (cf. Xu and Reid, 2011), we shall consider the
composite likelihood function%
\begin{equation*}
\mathcal{CL}(\boldsymbol{\theta },\boldsymbol{y})=f_{A_{1}}(\boldsymbol{%
\theta },\boldsymbol{y})f_{A_{2}}(\boldsymbol{\theta },\boldsymbol{y}),
\end{equation*}%
where%
\begin{align*}
f_{A_{1}}(\boldsymbol{\theta },\boldsymbol{y})& =f_{12}(\mu _{1},\mu
_{2},\rho ,y_{1},y_{2}), \\
f_{A_{2}}(\boldsymbol{\theta },\boldsymbol{y})& =f_{34}(\mu _{3},\mu
_{4},\rho ,y_{3},y_{4}),
\end{align*}%
where $f_{12}$ and $f_{34}$ are the densities of the marginals of $%
\boldsymbol{Y}$, i.e. bivariate normal distributions with mean vectors $(\mu
_{1},\mu _{2})^{T}$ and $(\mu _{3},\mu _{4})^{T}$, respectively, and common
variance-covariance matrix%
\begin{equation*}
\left( 
\begin{array}{cc}
1 & \rho \\ 
\rho & 1%
\end{array}%
\right) ,
\end{equation*}%
with expressions given by%
\begin{equation*}
f_{h,h+1}(\mu _{h},\mu _{h+1},\rho ,y_{h},y_{h+1})=\frac{1}{2\pi \sqrt{%
1-\rho ^{2}}}\exp \left\{ -\tfrac{1}{2(1-\rho ^{2})}Q(y_{h},y_{h+1})\right\}
,\text{ }h\in \{1,3\},
\end{equation*}%
being%
\begin{equation*}
Q(y_{h},y_{h+1})=(y_{h}-\mu _{h})^{2}-2\rho (y_{h}-\mu _{h})(y_{h+1}-\mu
_{h+1})+(y_{h+1}-\mu _{h+1})^{2},\text{ }h\in \{1,3\}.
\end{equation*}

In this context, the interest is focused in testing the composite null
hypothesis hypotheses%
\begin{equation}
H_{0}:\rho =\rho _{0}\text{\quad against\quad }H_{1}:\rho \neq \rho _{0},
\label{test}
\end{equation}%
by using the composite $\phi $-divergence test-statistics, presented above.
In this case, the parameter space is given by%
\begin{equation*}
\Theta =\left\{ \boldsymbol{\theta }=(\mu _{1},\mu _{2},\mu _{3},\mu
_{4},\rho )^{T}:\mu _{i}\in 
\mathbb{R}
,i=1,...,4\text{ and }-\tfrac{1}{5}\leq \rho \leq \tfrac{1}{3}\right\} .
\end{equation*}%
If we consider $g:\Theta \subseteq 
\mathbb{R}
^{5}\longrightarrow 
\mathbb{R}
$, with%
\begin{equation}
g(\boldsymbol{\theta })=g((\mu _{1},\mu _{2},\mu _{3},\mu _{4},\rho
)^{T})=\rho -\rho _{0},  \label{e4.2}
\end{equation}%
the parameter space under the null hypothesis is given by%
\begin{equation*}
\Theta _{0}=\left\{ \boldsymbol{\theta }=(\mu _{1},\mu _{2},\mu _{3},\mu
_{4},\rho )^{T}\in \Theta :g(\mu _{1},\mu _{2},\mu _{3},\mu _{4},\rho
)=0\right\} .
\end{equation*}%
It is now clear that the dimensions of both parameter spaces are $\dim
(\Theta )=5$ and $\dim (\Theta _{0})=4$. Consider now a random sample of
size $n$, $\boldsymbol{y}_{i}=(y_{i1},...,y_{i4})^{T}$, $i=1,...,n$. The
maximum composite likelihood estimators of the parameters $\mu _{i}$, $%
i=1,2,3,4$, and $\rho $ in $\Theta $ are obtained by standard maximization
of the composite log-density function associated to the random sample of
size $n$,%
\begin{align*}
c\ell (\boldsymbol{\theta },\boldsymbol{y}_{1},...,\boldsymbol{y}_{n})&
=\sum_{i=1}^{n}c\ell (\boldsymbol{\theta },\boldsymbol{y}_{i})=%
\sum_{i=1}^{n}\log \mathcal{CL}(\boldsymbol{\theta },\boldsymbol{y}_{i}) \\
& =\sum_{i=1}^{n}\left[ \log f_{A_{1}}(\boldsymbol{\theta },\boldsymbol{y}%
_{i})+\log f_{A_{2}}(\boldsymbol{\theta },\boldsymbol{y}_{i})\right] \\
& =\sum_{i=1}^{n}\ell _{A_{1}}(\boldsymbol{\theta },\boldsymbol{y}%
_{i})+\sum_{i=1}^{n}\ell _{A_{2}}(\boldsymbol{\theta },\boldsymbol{y}_{i}) \\
& =-\frac{n}{2}\log \left( 1-\rho ^{2}\right) -\frac{1}{2\left( 1-\rho
^{2}\right) }(\varsigma _{1}^{2}+\varsigma _{2}^{2}+\varsigma
_{3}^{2}+\varsigma _{4}^{2}-2\rho (\varsigma _{12}+\varsigma _{34}))+k,
\end{align*}%
where%
\begin{align*}
\varsigma _{j}^{2}& =\sum_{i=1}^{n}(y_{ij}-\mu _{j})^{2},\text{ }j\in
\{1,2,3,4\}, \\
\varsigma _{h,h+1}& =\sum_{i=1}^{n}(y_{ih}-\mu _{h})(y_{i,h+1}-\mu _{h+1}),%
\text{ }h\in \{1,3\},
\end{align*}%
and $k$ is a constant no dependent of the unknown parameters, i.e., by
solving the system of the following system of five equations%
\begin{align*}
(\overline{y}_{1}-\mu _{1})-\rho (\overline{y}_{2}-\mu _{2})& =0, \\
(\overline{y}_{2}-\mu _{2})-\rho (\overline{y}_{1}-\mu _{1})& =0, \\
(\overline{y}_{3}-\mu _{3})-\rho (\overline{y}_{4}-\mu _{4})& =0, \\
(\overline{y}_{4}-\mu _{4})-\rho (\overline{y}_{3}-\mu _{3})& =0, \\
n\rho ^{3}-\frac{\varsigma _{12}+\varsigma _{34}}{2}\rho ^{2}+\left( \frac{%
\varsigma _{1}^{2}+\varsigma _{2}^{2}+\varsigma _{3}^{2}+\varsigma _{4}^{2}}{%
2}-n\right) \rho -\frac{\varsigma _{12}+\varsigma _{34}}{2}& =0,
\end{align*}%
with%
\begin{equation}
\overline{y}_{j}=\frac{1}{n}\dsum\limits_{i=1}^{n}y_{ij},\text{ }j\in
\{1,2,3,4\}.  \label{sm}
\end{equation}%
From the first two equations we get%
\begin{align*}
\rho (\overline{y}_{1}-\mu _{1})-\rho ^{2}(\overline{y}_{2}-\mu _{2})& =0, \\
-\rho (\overline{y}_{1}-\mu _{1})+\overline{y}_{2}-\mu _{2}& =0.
\end{align*}%
Therefore%
\begin{equation*}
\left( 1-\rho ^{2}\right) (\overline{y}_{2}-\mu _{2})=0,
\end{equation*}%
and since we assume that $\rho \in (-\tfrac{1}{5},\tfrac{1}{3})\subset
(-1,1) $, it is obtained that%
\begin{equation*}
\widehat{\mu }_{1}=\overline{y}_{1}\text{\quad and\quad }\widehat{\mu }_{2}=%
\overline{y}_{2}.
\end{equation*}%
In a similar manner, from the third and fourth equations we can get that%
\begin{equation*}
\widehat{\mu }_{3}=\overline{y}_{3}\text{\quad and\quad }\widehat{\mu }_{4}=%
\overline{y}_{4}.
\end{equation*}%
The maximum composite likelihood estimator of $\rho $ under $\Theta $, $%
\widehat{\rho }$, is the real solution of the following cubic equation%
\begin{equation*}
\rho ^{3}-\frac{v_{12}+v_{34}}{2}\rho ^{2}+\left( \frac{%
v_{1}^{2}+v_{2}^{2}+v_{3}^{2}+v_{4}^{2}}{2}-1\right) \rho -\frac{%
v_{12}+v_{34}}{2}=0,
\end{equation*}%
where%
\begin{align*}
v_{j}^{2}& =\frac{1}{n}\widehat{\varsigma }_{j}^{2}=\frac{1}{n}%
\sum_{i=1}^{n}(y_{ij}-\overline{y}_{j})^{2},\text{ }j\in \{1,2,3,4\}, \\
v_{h,h+1}& =\frac{1}{n}\widehat{\varsigma }_{h,h+1}=\frac{1}{n}%
\sum_{i=1}^{n}(y_{ih}-\overline{y}_{h})(y_{i,h+1}-\overline{y}_{h+1}),\text{ 
}h\in \{1,3\},
\end{align*}%
and $v_{j}^{2}$, $j\in \{1,2,3,4\}$, are sampling variances and $v_{h,h+1},$ 
$h\in \{1,3\}$, sampling covariances.

Under $\Theta _{0}$, the restricted maximum composite likelihood estimators
of the parameters $\mu _{j}$, $j\in \{1,2,3,4\}$ are given by,%
\begin{equation*}
\widetilde{\mu }_{j}=\overline{y}_{j},\text{ }j\in \{1,2,3,4\},
\end{equation*}%
with\ $\overline{y}_{i}$ given by (\ref{sm}). Therefore, in our model, the
maximum composite likelihood estimators are 
\begin{equation*}
\widehat{\boldsymbol{\theta }}_{c}=\left( \overline{y}_{1},\overline{y}_{2},%
\overline{y}_{3},\overline{y}_{4},\widehat{\rho }\right) ^{T}\text{ and }%
\widetilde{\boldsymbol{\theta }}_{rc}=\left( \overline{y}_{1},\overline{y}%
_{2},\overline{y}_{3},\overline{y}_{4},\rho _{0}\right) ^{T},
\end{equation*}%
under $\Theta $ and $\Theta _{0}$ respectively.

After some heavy algebraic manipulations the sensitivity or Hessian matrix $%
\boldsymbol{H}\mathbf{(}\boldsymbol{\theta }\mathbf{)}$ is given by%
\begin{equation}
\boldsymbol{H}\mathbf{(}\boldsymbol{\theta }\mathbf{)}=\frac{1}{1-\rho ^{2}}%
\left( 
\begin{array}{ccccc}
1 & -\rho & 0 & 0 & 0 \\ 
-\rho & 1 & 0 & 0 & 0 \\ 
0 & 0 & 1 & -\rho & 0 \\ 
0 & 0 & -\rho & 1 & 0 \\ 
0 & 0 & 0 & 0 & 2\frac{1+\rho ^{2}}{1-\rho ^{2}}%
\end{array}%
\right) .  \label{e4.3}
\end{equation}
In a similar manner, the expression of the variability matrix $\boldsymbol{J}%
\mathbf{(}\boldsymbol{\theta }\mathbf{)}$ coincides with that of sensitivity
matrix $\boldsymbol{H}\mathbf{(}\boldsymbol{\theta }\mathbf{)}$, i.e. $%
\boldsymbol{J}\mathbf{(}\boldsymbol{\theta }\mathbf{)}=\boldsymbol{H}\mathbf{%
(}\boldsymbol{\theta }\mathbf{)}$.

In order to get the unique non zero eigenvalue $\beta _{1}$ from Theorem \ref%
{Theorem4}, it is necessary to obtain (\ref{G}), which, in the present
setup, is given by%
\begin{equation}
\boldsymbol{G}\mathbf{(}\boldsymbol{\theta }\mathbf{)}=\frac{\partial g(%
\boldsymbol{\theta })}{\partial \boldsymbol{\theta }}=(0,0,0,0,1)^{T},
\label{e4.5}
\end{equation}%
where $g(\boldsymbol{\theta })$ is given by (\ref{e4.2}) in the context of
the present example. In addition, taking into account that $\boldsymbol{Q}%
\mathbf{(}\boldsymbol{\theta }\mathbf{)}=-\boldsymbol{G}\mathbf{(}%
\boldsymbol{\theta }\mathbf{)}$ and after some algebra, it is concluded that 
$\beta _{1}=1$\ and therefore the asymptotic distribution of the composite $%
\phi $-divergence test-statistics is%
\begin{equation*}
T_{\phi ,n}(\widehat{\boldsymbol{\theta }}_{c}\mathbf{,}\widetilde{%
\boldsymbol{\theta }}_{rc})=\frac{2n}{\phi ^{\prime \prime }(1)}D_{\phi }(%
\widehat{\boldsymbol{\theta }}_{c},\widetilde{\boldsymbol{\theta }}_{rc})%
\overset{\mathcal{L}}{\underset{n\rightarrow \infty }{\longrightarrow }}%
{\large \chi }_{1}^{2},
\end{equation*}%
under the null hypothesis $H_{0}:$ $\rho =\rho _{0}$. In a completely
similar manner, the composite likelihood ratio test, presented in Remark 1,
for testing $H_{0}:$ $\rho =\rho _{0}$, is 
\begin{equation*}
\lambda _{n}(\widehat{\boldsymbol{\theta }}_{c},\widetilde{\boldsymbol{%
\theta }}_{rc})=2\left( c\ell (\widehat{\boldsymbol{\theta }}_{c})-c\ell (%
\widetilde{\boldsymbol{\theta }}_{rc})\right) \overset{\mathcal{L}}{\underset%
{n\rightarrow \infty }{\longrightarrow }}{\large \chi }_{1}^{2},
\end{equation*}%
because the only non zero eigenvalue of the asymptotic distribution (\ref{R1}%
) is equal to one.

In a similar way, if we consider the composite $(h,\phi )$-divergence
test-statistics, we have 
\begin{equation*}
T_{h,\phi ,n}(\widehat{\boldsymbol{\theta }}_{c},\widetilde{\boldsymbol{%
\theta }}_{rc})=\frac{2n}{\phi ^{\prime \prime }(1)h^{\prime }(0)}h\left(
D_{\phi }(\widehat{\boldsymbol{\theta }}_{c}\mathbf{,}\widetilde{\boldsymbol{%
\theta }}_{rc})\right) \overset{\mathcal{L}}{\underset{n\rightarrow \infty }{%
\longrightarrow }}\chi _{1}^{2}.
\end{equation*}%
I order to apply the above theoretic issues in practice it is necessary to
consider a particular convex function $\phi $ in order to get a concrete $%
\phi $-divergence or to consider $\phi $ and $h$ in order to get an $(h,\phi
)$-divergence. Using the R\'{e}nyi's family of divergences, i.e, a family of 
$(h,\phi )$-divergences with $\phi $ and $h$ given in Table 1, the family of
test-statistics is given by%
\begin{align*}
T_{n}^{r}(\widehat{\boldsymbol{\theta }}_{c}\mathbf{,}\widetilde{\boldsymbol{%
\theta }}_{rc})& =\frac{2n}{r(r-1)}\left( \log \int_{%
\mathbb{R}
^{2}}\frac{f_{12}^{r}(\widehat{\mu }_{1},\widehat{\mu }_{2},\widehat{\rho }%
,y_{1},y_{2})}{f_{12}^{r-1}(\widehat{\mu }_{1},\widehat{\mu }_{2},\rho
_{0},y_{1},y_{2})}dy_{1}dy_{2}+\log \int_{%
\mathbb{R}
^{2}}\frac{f_{34}^{r}(\widehat{\mu }_{3},\widehat{\mu }_{4},\widehat{\rho }%
,y_{3},y_{4})}{f_{34}^{r-1}(\widehat{\mu }_{3},\widehat{\mu }_{4},\rho
_{0},y_{3},y_{4})}dy_{3}dy_{4}\right) \\
& =\frac{4n}{r(r-1)}\log \int_{%
\mathbb{R}
^{2}}\frac{f_{12}^{r}(\widehat{\mu }_{1},\widehat{\mu }_{2},\widehat{\rho }%
,y_{1},y_{2})}{f_{12}^{r-1}(\widehat{\mu }_{1},\widehat{\mu }_{2},\rho
_{0},y_{1},y_{2})}dy_{1}dy_{2},
\end{align*}%
for $r\neq 0,1$. The last equality follows becuase the integrals does not
depend on $\widehat{\mu }_{1},\widehat{\mu }_{2}$, $\widehat{\mu }_{3}$ and $%
\widehat{\mu }_{4}.$ For $r=1$ we have 
\begin{align*}
T_{n}^{1}(\widehat{\boldsymbol{\theta }}_{c}\mathbf{,}\widetilde{\boldsymbol{%
\theta }}_{rc})& =2n\left( \dint\nolimits_{%
\mathbb{R}
^{2}}f_{34}(\widehat{\mu }_{3},\widehat{\mu }_{4},\widehat{\rho }%
,y_{3},y_{4})dy_{3}dy_{4}\dint\nolimits_{%
\mathbb{R}
^{2}}f_{12}(\widehat{\mu }_{1},\widehat{\mu }_{2},\widehat{\rho }%
,y_{1},y_{2})\log \frac{f_{12}(\widehat{\mu }_{1},\widehat{\mu }_{2},%
\widehat{\rho },y_{1},y_{2})}{f_{12}(\widehat{\mu }_{1},\widehat{\mu }%
_{2},\rho _{0},y_{1},y_{2})}dy_{1}dy_{2}\right. \\
& \left. +\dint\nolimits_{%
\mathbb{R}
^{2}}f_{12}(\widehat{\mu }_{1},\widehat{\mu }_{2},\widehat{\rho }%
,y_{1},y_{2})dy_{1}dy_{2}\dint\nolimits_{%
\mathbb{R}
^{2}}f_{34}(\widehat{\mu }_{3},\widehat{\mu }_{4},\widehat{\rho }%
,y_{3},y_{4})\log \frac{f_{34}(\widehat{\mu }_{3},\widehat{\mu }_{4},%
\widehat{\rho },y_{3},y_{4})}{f_{34}(\widehat{\mu }_{3},\widehat{\mu }%
_{4},\rho _{0},y_{3},y_{4})}dy_{3}dy_{4}\right) \\
& =4n\dint\nolimits_{%
\mathbb{R}
^{2}}f_{12}(\widehat{\mu }_{1},\widehat{\mu }_{2},\widehat{\rho }%
,y_{1},y_{2})\log \frac{f_{12}(\widehat{\mu }_{1},\widehat{\mu }_{2},%
\widehat{\rho },y_{1},y_{2})}{f_{12}(\widehat{\mu }_{1},\widehat{\mu }%
_{2},\rho _{0},y_{1},y_{2})}dy_{1}dy_{2}
\end{align*}%
\begin{equation*}
T_{n}^{0}(\widehat{\boldsymbol{\theta }}_{c}\mathbf{,}\widetilde{\boldsymbol{%
\theta }}_{rc})=T_{n}^{1}(\widehat{\boldsymbol{\theta }}_{rc}\mathbf{,}%
\widetilde{\boldsymbol{\theta }}_{c}).
\end{equation*}%
Based on the expression for R\'{e}nyi divergence in normal populations (for
more details see Pardo, 2006, p. 33) we have%
\begin{equation}
T_{n}^{r}(\widehat{\boldsymbol{\theta }}_{c},\widetilde{\boldsymbol{\theta }}%
_{rc})=\left\{ 
\begin{array}{ll}
\frac{2n}{r(r-1)}\log \dfrac{(1-\rho _{0}^{2})^{r}(1-\widehat{\rho }%
^{2})^{-(r-1)}}{1-\left[ r\rho _{0}+(1-r)\widehat{\rho }\right] ^{2}}, & 
r\notin \{0,1\},\;\widehat{\rho }\in \left( \frac{r}{r-1}\rho _{0}-\frac{1}{%
\left\vert r-1\right\vert },\frac{r}{r-1}\rho _{0}+\frac{1}{\left\vert
r-1\right\vert }\right) , \\ 
+\infty , & r\notin \{0,1\},\;\widehat{\rho }\notin \left( \frac{r}{r-1}\rho
_{0}-\frac{1}{\left\vert r-1\right\vert },\frac{r}{r-1}\rho _{0}+\frac{1}{%
\left\vert r-1\right\vert }\right) , \\ 
2n\left( \log \frac{1-\rho _{0}^{2}}{1-\widehat{\rho }^{2}}+2\frac{\rho
_{0}(\rho _{0}-\widehat{\rho })}{1-\rho _{0}^{2}}\right) , & r=1, \\ 
2n\left( \log \frac{1-\widehat{\rho }^{2}}{1-\rho _{0}^{2}}+2\frac{\widehat{%
\rho }(\widehat{\rho }-\rho _{0})}{1-\widehat{\rho }^{2}}\right) , & r=0.%
\end{array}%
\right.  \label{Renyi}
\end{equation}%
Similarly, using the Cressie-Read's family of divergences, the family of
test-statistics is given by%
\begin{equation}
T_{n}^{\lambda }(\widehat{\boldsymbol{\theta }}_{c},\widetilde{\boldsymbol{%
\theta }}_{rc})=\left\{ 
\begin{array}{ll}
\frac{4n}{\lambda (\lambda +1)}\left( \sqrt{\dfrac{(1-\rho
_{0}^{2})^{\lambda +1}(1-\widehat{\rho }^{2})^{-\lambda }}{1-\left[ (\lambda
+1)\rho _{0}-\lambda \widehat{\rho }\right] ^{2}}}-1\right) , & \lambda
\notin \{0,-1\},\;\widehat{\rho }\in \left( \frac{\lambda +1}{\lambda }\rho
_{0}-\frac{1}{\left\vert \lambda \right\vert },\frac{\lambda +1}{\lambda }%
\rho _{0}+\frac{1}{\left\vert \lambda \right\vert }\right) , \\ 
+\infty , & \lambda \notin \{0,-1\},\;\widehat{\rho }\notin \left( \frac{%
\lambda +1}{\lambda }\rho _{0}-\frac{1}{\left\vert \lambda \right\vert },%
\frac{\lambda +1}{\lambda }\rho _{0}+\frac{1}{\left\vert \lambda \right\vert 
}\right) , \\ 
2n\left( \log \frac{1-\rho _{0}^{2}}{1-\widehat{\rho }^{2}}+2\frac{\rho
_{0}(\rho _{0}-\widehat{\rho })}{1-\rho _{0}^{2}}\right) =T_{n}^{1}(\widehat{%
\boldsymbol{\theta }}_{c},\widetilde{\boldsymbol{\theta }}_{rc}), & \lambda
=0, \\ 
2n\left( \log \frac{1-\widehat{\rho }^{2}}{1-\rho _{0}^{2}}+2\frac{\widehat{%
\rho }(\widehat{\rho }-\rho _{0})}{1-\widehat{\rho }^{2}}\right) =T_{n}^{0}(%
\widehat{\boldsymbol{\theta }}_{c},\widetilde{\boldsymbol{\theta }}_{rc}), & 
\lambda =-1.%
\end{array}%
\right.  \label{Cressie}
\end{equation}%
After some algebra we can also obtain the composite likelihood ratio test.
This has the following expression%
\begin{align}
\lambda _{n}(\widehat{\boldsymbol{\theta }}_{c},\widetilde{\boldsymbol{%
\theta }}_{rc})& =2\left( c\ell (\widehat{\boldsymbol{\theta }}_{c},%
\boldsymbol{y}_{1},...,\boldsymbol{y}_{n})-c\ell (\widetilde{\boldsymbol{%
\theta }}_{rc},\boldsymbol{y}_{1},...,\boldsymbol{y}_{n})\right)  \notag \\
& =2n\left[ \log \frac{1-\rho _{0}^{2}}{1-\widehat{\rho }^{2}}%
+(v_{1}^{2}+v_{2}^{2}+v_{3}^{2}+v_{4}^{2})\left( \frac{1}{1-\rho _{0}^{2}}-%
\frac{1}{1-\widehat{\rho }^{2}}\right) -2(v_{12}+v_{34})\left( \frac{\rho
_{0}}{1-\rho _{0}^{2}}-\frac{\widehat{\rho }}{1-\widehat{\rho }^{2}}\right) %
\right] .  \label{CLR}
\end{align}

\section{Simulation study\label{Sec7}}

In this section a simulation study is presented in order to study the
behavior of the composite $\phi $-divergence test-statistics. The
theoretical model studied in the previous section is followed by using the
composite Cressie-Read test-statistics (\ref{Cressie})$.$ The composite
likelihood ratio test-statistic (CLRT), given in (\ref{CLR}), is also
considered. A special attention has been paid to the hypothesis testing (\ref%
{test}) with $\rho _{0}\in \{-0.1,0.2\}$. The case $\rho _{0}=0$ has been
considered, but this case is less important since taking into account the
way of the theoretical model under consideration and having the case of
independent observations, the composite likelihood theory is useless. For
finite sample sizes and nominal size $\alpha =0.05$, the estimated
significance level for different composite Cressie-Read test-statistics as
well as for the CLRT, are given by 
\begin{equation*}
\alpha _{n}^{(\lambda )}(\rho _{0})=\Pr (T_{n}^{\lambda }(\widehat{%
\boldsymbol{\theta }}_{c},\widetilde{\boldsymbol{\theta }}_{rc})>\chi
_{1,0.05}^{2}|H_{0})\text{, }\lambda \in 
\mathbb{R}
\text{ and }\alpha _{n}^{(CLRT)}(\rho _{0})=\Pr \left( \lambda _{n}(\widehat{%
\boldsymbol{\theta }}_{c},\widetilde{\boldsymbol{\theta }}_{rc})>\chi
_{1,0.05}^{2}|H_{0}\right) .
\end{equation*}%
More thorougly, the composite Cressie-Read test-statistics with $\lambda $ $%
\in \left\{ -1,-0.5,0.2/3,1,1.5\right\} $\ have beed selected for the study.
Following Dale (1986), we consider the inequality 
\begin{equation}
\left\vert \text{logit}(1-\alpha _{n}^{(\bullet )})-\text{logit}(1-\alpha
)\right\vert \leq \varepsilon  \label{a}
\end{equation}%
where logit$(p)=$ $\ln (p/\left( 1-p\right) )$. By chosing $\varepsilon
=0.45 $ the composite test-statistics valid for the study are limited to
those verifying $\alpha _{n}^{(\lambda )}\in \left( 0.0325,0.07625\right) $.
This criterion has been used in many previous studies, see for instance
Cressie et al (2003), Mart\'{\i}n et al. (2014), Mart\'{\i}n and Pardo
(2012) and references therein.

Through $R=10,000$ replications of the simulation experiment, with the model
under the null hypothesis, the estimated significance level for different
composite Cressie-Read test-statistics are%
\begin{equation*}
\widehat{\alpha }_{n}^{(\lambda )}(\rho _{0})=\widehat{\Pr }(T_{n}^{\lambda
}(\widehat{\boldsymbol{\theta }}_{c},\widetilde{\boldsymbol{\theta }}%
_{rc})>\chi _{1,0.05}^{2}|H_{0})=\frac{\dsum\limits_{i=1}^{R}\emph{I}%
(T_{n,i}^{\lambda }(\widehat{\boldsymbol{\theta }}_{c},\widetilde{%
\boldsymbol{\theta }}_{rc})>\chi _{1,0.05}^{2}|\rho _{0})}{R},
\end{equation*}%
with $\emph{I}(S)$ being and indicator function (with value $1$ if $S$ is
true and $0$ otherwise) and the estimated significance level for CLRT%
\begin{equation*}
\widehat{\alpha }_{n}^{(CLRT)}(\rho _{0})=\widehat{\Pr }(\lambda _{n}(%
\widehat{\boldsymbol{\theta }}_{c},\widetilde{\boldsymbol{\theta }}%
_{rc})>\chi _{1,0.05}^{2}|H_{0})=\frac{\dsum\limits_{i=1}^{R}\emph{I}%
(\lambda _{n}(\widehat{\boldsymbol{\theta }}_{c},\widetilde{\boldsymbol{%
\theta }}_{rc})>\chi _{1,0.05}^{2}|\rho _{0})}{R}.
\end{equation*}

In Table \ref{table1} we present the simulated level for different values of 
$\lambda $ $\in \left\{ -1,-0.5,0.2/3,1,1.5\right\} $ as well as for the
CLRT, when $n=100,$ $n=200$ and $n=300$ for $\rho _{0}=-0.1$ and $\rho
_{0}=0.2$. In order to investigate the behavior for $\rho _{0}=0$ we present
in Table \ref{table2} the simulated level for $\lambda $ $\in \left\{
-1,-0.5,0.2/3,1,1.5\right\} $ as well as the simulated level of CLRT for $%
n=50$, $n=100$, $n=200$ and $n=300$. Clearly, as expected the performance of
the traditional divergence and likelihood methods is stronger in comparison
with the composite divergence and likelihood methods.

\begin{table}[hbpt]  \tabcolsep2.8pt \small\centering%
$%
\begin{tabular}{ccccccc}
\hline
& \multicolumn{2}{c}{$n=100$} & \multicolumn{2}{c}{$n=200$} & 
\multicolumn{2}{c}{$n=300$} \\ \hline
& $\rho _{0}=-0.1$ & $\rho _{0}=0.2$ & $\rho _{0}=-0.1$ & $\rho _{0}=0.2$ & $%
\rho _{0}=-0.1$ & $\rho _{0}=0.2$ \\ \hline
$CLRT$ & $0.0688$ & $0.0694$ & $0.0673$ & $0.0687$ & $0.0645$ & $0.0662$ \\ 
$\lambda =-1$ & $0.0756$ & $0.0762$ & $0.0706$ & $0.0740$ & $0.0666$ & $%
0.0685$ \\ 
$\lambda =-0.5$ & $0.0738$ & $0.0746$ & $0.0697$ & $0.0727$ & $0.0662$ & $%
0.0670$ \\ 
$\lambda =0$ & $0.0725$ & $0.0739$ & $0.0691$ & $0.0720$ & $0.0659$ & $%
0.0672 $ \\ 
$\lambda =2/3$ & $0.0726$ & $0.0739$ & $0.0694$ & $0.0719$ & $0.0662$ & $%
0.0677$ \\ 
$\lambda =1$ & $0.0739$ & $0.0747$ & $0.0700$ & $0.0720$ & $0.0662$ & $%
0.0680 $ \\ 
$\lambda =1.5$ & $0.0762$ & $0.0769$ & $0.0711$ & $0.0729$ & $0.0674$ & $%
0.0677$ \\ \hline
\end{tabular}%
$%
\caption{Simulated significance level for $\rho_0=-0.1$ and
$\rho_0=0.2$.\label{table1}}%
\end{table}%

\begin{table}[hbpt]  \tabcolsep2.8pt \small\centering%
$%
\begin{tabular}{ccccc}
\hline
& $n=50$ & $n=100$ & $n=200$ & $n=300$ \\ \hline
$LRT$ & $0.0543$ & $0.0529$ & $0.0527$ & $0.0526$ \\ 
$\lambda =-1$ & $0.0707$ & $0.0605$ & $0.0559$ & $0.0542$ \\ 
$\lambda =-0.5$ & $0.0677$ & $0.0594$ & $0.0553$ & $0.0540$ \\ 
$\lambda =0$ & $0.0659$ & $0.0577$ & $0.0552$ & $0.0540$ \\ 
$\lambda =2/3$ & $0.0670$ & $0.0591$ & $0.0552$ & $0.0540$ \\ 
$\lambda =1$ & $0.0686$ & $0.0597$ & $0.0553$ & $0.0541$ \\ 
$\lambda =1.5$ & $0.0726$ & $0.0610$ & $0.0564$ & $0.0544$ \\ \hline
\end{tabular}%
$\caption{Simulated significance level for $\rho_0=0$.\label{table2}}%
\end{table}%

For finite sample sizes and nominal size $\alpha =0.05$, the simulated
powers are obtained under $H_{1}$ in (\ref{test}), when $\rho \in
\{-0.2,-0.15,0,0.1\}$ and $\rho _{0}=-0.1$ (Table \ref{table3}) and when $%
\rho \in \{0,0.15,0.25,0.3\}$ and $\rho _{0}=0.2$ (Table \ref{table4}). The
(simulated) power for different composite Cressie-Read test-statistics is
obtained by 
\begin{equation*}
\beta _{n}^{(\lambda )}(\rho _{0},\rho )=\Pr (T_{n}^{\lambda }(\widehat{%
\boldsymbol{\theta }}_{c},\widetilde{\boldsymbol{\theta }}_{rc})>\chi
_{1,0.05}^{2}|H_{1})\text{ and }\widehat{\beta }_{n}^{(\lambda )}(\rho
_{0},\rho )=\frac{\dsum\limits_{i=1}^{R}\emph{I}(T_{n}^{\lambda }(\widehat{%
\boldsymbol{\theta }}_{c},\widetilde{\boldsymbol{\theta }}_{rc})>\chi
_{1,0.05}^{2}|\rho _{0},\rho )}{R},
\end{equation*}%
and for the CLRT by%
\begin{equation*}
\beta _{n}^{(CLRT)}(\rho _{0},\rho )=\Pr (\lambda _{n}(\widehat{\boldsymbol{%
\theta }}_{c},\widetilde{\boldsymbol{\theta }}_{rc})>\chi
_{1,0.05}^{2}|H_{1})\text{ and }\widehat{\beta }_{n}^{(CLRT)}(\rho _{0},\rho
)=\frac{\dsum\limits_{i=1}^{R}\emph{I}(\lambda _{n}(\widehat{\boldsymbol{%
\theta }}_{c},\widetilde{\boldsymbol{\theta }}_{rc})>\chi _{1,0.05}^{2}|\rho
_{0},\rho )}{R}.
\end{equation*}
Among the composite test-statistics with simulated significance levels
verifying (\ref{a}), at first sight the composite test-statistics with
higher powers should be selected however since in general high powers
correspond to high significance levels, this choice is not straighforward.
For this reason, based on $\beta _{n}^{LRT}-\alpha _{n}^{LRT}$ as baseline,
the efficiencies relative to the composite likelihood ratio test, given by 
\begin{equation*}
e_{n}^{\left( \lambda \right) }=\frac{(\beta _{n}^{(\lambda )}-\alpha
_{n}^{(\lambda )})-(\beta _{n}^{LRT}-\alpha _{n}^{LRT})}{\beta
_{n}^{LRT}-\alpha _{n}^{LRT}},\text{ }\lambda \in \left\{
-1,-0.5,0.2/3,1,1.5\right\} ,
\end{equation*}%
were considered for $n=100,$ $n=200$ and $n=300$. Only the values of the
power for $\lambda =-1/2$ are included in Tables \ref{table3} and \ref%
{table4}, in order to show that the corresponding composite test-statistic
is a good alternative to the composite likelihood ratio test-statistic. The
values of the powers for which the values of $e_{n}^{\left( -1/2\right) }$\
are positive, i.e., the case in which the composite test-statistic
associated to $\lambda =-1/2$ is better that the composite likelihood ratio
test, are shown in bold in Tables \ref{table3} and \ref{table4}. This choice
of $\lambda =-1/2$ divergence based test-statistic has been also recommended
in Morales et al. (1997) and Mart\'{\i}n et al. (2016).

\begin{table}[hbpt]  \tabcolsep2.8pt \small\centering%
$%
\begin{tabular}{cc|cccc|}
\cline{3-6}
&  & $\rho =-0.2$ & $\rho =-0.15$ & $\rho =0$ & $\rho =0.1$ \\ \hline
\multicolumn{1}{|c}{$n=100$} & \multicolumn{1}{|c|}{$CLRT$} & $0.3584$ & $%
0.1604$ & $0.2993$ & $0.7958$ \\ 
\multicolumn{1}{|c}{} & \multicolumn{1}{|c|}{$\lambda =-1/2$} & $\boldsymbol{%
0.3751}$ & $\boldsymbol{0.1750}$ & $\boldsymbol{0.3057}$ & $\boldsymbol{%
0.8076}$ \\ \hline
\multicolumn{1}{|c}{$n=200$} & \multicolumn{1}{|c|}{$CLRT$} & $0.5455$ & $%
0.2227$ & $0.5087$ & $0.9705$ \\ 
\multicolumn{1}{|c}{} & \multicolumn{1}{|c|}{$\lambda =-1/2$} & $\boldsymbol{%
0.5512}$ & $\boldsymbol{0.2322}$ & $\boldsymbol{0.5114}$ & $\boldsymbol{%
0.9737}$ \\ \hline
\multicolumn{1}{|c}{$n=300$} & \multicolumn{1}{|c|}{$CLRT$} & $0.7770$ & $%
0.2705$ & $0.8087$ & $0.9962$ \\ 
\multicolumn{1}{|c}{} & \multicolumn{1}{|c|}{$\lambda =-1/2$} & $\boldsymbol{%
0.7797}$ & $\boldsymbol{0.2795}$ & $\boldsymbol{0.8112}$ & $0.9970$ \\ \hline
\end{tabular}%
$\caption{Simulated powers for $\rho_0=-0.1$.\label{table3}}%
\end{table}%

\begin{table}[hbpt]  \tabcolsep2.8pt \small\centering%
$%
\begin{tabular}{cc|cccc|}
\cline{3-6}
&  & $\rho =0$ & $\rho =0.15$ & $\rho =0.25$ & $\rho =0.3$ \\ \hline
\multicolumn{1}{|c}{$n=100$} & \multicolumn{1}{|c|}{$CLRT$} & $0.8054$ & $%
0.1227$ & $0.1534$ & $0.3689$ \\ 
\multicolumn{1}{|c}{} & \multicolumn{1}{|c|}{$\lambda =-1/2$} & $\boldsymbol{%
0.8118}$ & $\boldsymbol{0.1305}$ & $\boldsymbol{0.1602}$ & $\boldsymbol{%
0.3806}$ \\ \hline
\multicolumn{1}{|c}{$n=200$} & \multicolumn{1}{|c|}{$CLRT$} & $0.9813$ & $%
0.1904$ & $0.2146$ & $0.5818$ \\ 
\multicolumn{1}{|c}{} & \multicolumn{1}{|c|}{$\lambda =-1/2$} & $0.9825$ & $%
0.1920$ & $\boldsymbol{0.2194}$ & $\boldsymbol{0.5957}$ \\ \hline
\multicolumn{1}{|c}{$n=300$} & \multicolumn{1}{|c|}{$CLRT$} & $0.9978$ & $%
0.2591$ & $0.2870$ & $0.7482$ \\ 
\multicolumn{1}{|c}{} & \multicolumn{1}{|c|}{$\lambda =-1/2$} & $0.9979$ & $%
0.2577$ & $\boldsymbol{0.2935}$ & $\boldsymbol{0.7612}$ \\ \hline
\end{tabular}%
$\caption{Simulated powers  for $\rho_0=0.2$.\label{table4}}%
\end{table}%

\section{Conclusions\label{Sec8}}

This paper presents the theoretical background for the development of
statistical tests for testing composite hypotheses when the composite
likelihood is used instead of the classic likelihood of the data. The test
statistic is based on the notion of phi-divergence and its by products, that
is measures of the statistical distance between the theoretical model and
the respective empirical one. The notion of divergence or disparity provides
with abstract methods of estimation and testing and four monographs,
mentioned in the introductory section, developed the state of the art on
this subject.

This work is the first, to the best of our knowledge, which try to link the
notion of composite likelihood with the notion of divergence between
theoretical and empirical models for testing hypotheses. There are several
extensions to this framework which can be considered. The theoretical
framework, presented here, would be extended to develop statistical tests
for testing homogeneity of two or more populations on the basis of composite
likelihood. On the other hand, minimum phi-divergence or disparity
procedures have been observed to provide strong robustness properties in
estimation and testing problems. It would be maybe of interest to proceed in
this direction in a composite likelihood setting.

\section{Appendix\label{Sec9}}

\subsection{Proof of Theorem \protect\ref{Theorem3}\label{ApA}}

Following Sen and Singer (1993, p. 242-3), let $\boldsymbol{\theta }_{n}=%
\boldsymbol{\theta }+n^{-1/2}\boldsymbol{v}$, where $\left\Vert \boldsymbol{v%
}\right\Vert <K^{\ast }$, $0<K^{\ast }<\infty $. Consider now the following
Taylor expansion of the partial derivative of the composite log-density,%
\begin{equation}
\frac{1}{\sqrt{n}}\dsum\limits_{i=1}^{n}\left. \frac{\partial }{\partial 
\boldsymbol{\theta }}c\ell (\boldsymbol{\theta }\mathbf{,}\boldsymbol{y}_{i}%
\mathbf{)}\right\vert _{\boldsymbol{\theta }=\boldsymbol{\theta }_{n}}=\frac{%
1}{\sqrt{n}}\dsum\limits_{i=1}^{n}\frac{\partial }{\partial \boldsymbol{%
\theta }}c\ell (\boldsymbol{\theta }\mathbf{,}\boldsymbol{y}_{i}\mathbf{)+}%
\frac{1}{n}\dsum\limits_{i=1}^{n}\left. \frac{\partial ^{2}}{\partial 
\boldsymbol{\theta }\mathbf{\partial }\boldsymbol{\theta }^{T}}c\ell (%
\boldsymbol{\theta }\mathbf{,}\boldsymbol{y}_{i}\mathbf{)}\right\vert _{%
\boldsymbol{\theta }=\boldsymbol{\theta }_{n}^{\ast }}\sqrt{n}\left( 
\boldsymbol{\theta }_{n}-\boldsymbol{\theta }\right) ,  \label{A0}
\end{equation}%
where $\boldsymbol{\theta }_{n}^{\ast }$ belongs to the line segment joining 
$\boldsymbol{\theta }$ and $\boldsymbol{\theta }_{n}$. Then, observing that
(cf. Theorem 2.3.6 of Sen and Singer, 1993, p. 61)%
\begin{equation*}
\frac{1}{n}\dsum\limits_{i=1}^{n}\frac{\partial ^{2}}{\partial \boldsymbol{%
\theta }\mathbf{\partial }\boldsymbol{\theta }^{T}}c\ell (\boldsymbol{\theta 
}\mathbf{,}\boldsymbol{y}_{i}\mathbf{)}\overset{P}{\underset{n\rightarrow
\infty }{\longrightarrow }}E_{\boldsymbol{\theta }}\left[ \frac{\partial ^{2}%
}{\partial \boldsymbol{\theta }\mathbf{\partial }\boldsymbol{\theta }^{T}}%
c\ell (\boldsymbol{\theta }\mathbf{,}\boldsymbol{Y}\mathbf{)}\right] =E_{%
\boldsymbol{\theta }}\left[ \frac{\partial }{\partial \boldsymbol{\theta }}%
\boldsymbol{u}^{T}(\boldsymbol{\theta }\mathbf{,}\boldsymbol{Y}\mathbf{)}%
\right] =-\boldsymbol{H}(\boldsymbol{\theta }),
\end{equation*}%
equation (\ref{A0}) leads%
\begin{equation}
\left. \frac{1}{\sqrt{n}}\dsum\limits_{i=1}^{n}\frac{\partial }{\partial 
\boldsymbol{\theta }}c\ell (\boldsymbol{\theta }\mathbf{,}\boldsymbol{y}_{i}%
\mathbf{)}\right\vert _{\boldsymbol{\theta }=\boldsymbol{\theta }_{n}}=\frac{%
1}{\sqrt{n}}\dsum\limits_{i=1}^{n}\frac{\partial }{\partial \boldsymbol{%
\theta }}c\ell (\boldsymbol{\theta }\mathbf{,}\boldsymbol{y}_{i}\mathbf{)-}%
\boldsymbol{H}(\boldsymbol{\theta })\sqrt{n}\left( \boldsymbol{\theta }_{n}-%
\boldsymbol{\theta }\right) +o_{P}(1).  \label{A1}
\end{equation}%
Since $\boldsymbol{G}(\boldsymbol{\theta })=\frac{\partial \boldsymbol{g}%
^{T}(\boldsymbol{\theta })}{\partial \boldsymbol{\theta }}$ is continuous in 
$\boldsymbol{\theta }$, it is true that,%
\begin{equation}
\boldsymbol{g}(\boldsymbol{\theta }_{n})=\boldsymbol{G}^{T}(\boldsymbol{%
\theta })\sqrt{n}\left( \boldsymbol{\theta }_{n}-\boldsymbol{\theta }\right)
+o_{P}(1).  \label{A2}
\end{equation}%
Since, the restricted maximum composite likelihood estimator $\widetilde{%
\boldsymbol{\theta }}_{rc}$ should satisfy the likelihood equations%
\begin{align*}
\dsum\limits_{i=1}^{n}\frac{\partial }{\partial \boldsymbol{\theta }}c\ell (%
\boldsymbol{\theta }\mathbf{,}\boldsymbol{y}_{i}\mathbf{)+}\boldsymbol{G}(%
\boldsymbol{\theta })\mathbf{\lambda }& =\boldsymbol{0}_{p}\mathbf{,} \\
\boldsymbol{g}(\boldsymbol{\theta })& =\boldsymbol{0}_{r}\mathbf{,}
\end{align*}%
and in view of (\ref{A1}) and (\ref{A2}) it holds that%
\begin{align*}
\frac{1}{\sqrt{n}}\dsum\limits_{i=1}^{n}\frac{\partial }{\partial 
\boldsymbol{\theta }}c\ell (\boldsymbol{\theta }\mathbf{,}\boldsymbol{y}_{i}%
\mathbf{)-}\boldsymbol{H}(\boldsymbol{\theta })\sqrt{n}\left( \widetilde{%
\boldsymbol{\theta }}_{rc}-\boldsymbol{\theta }\right) +\boldsymbol{G}(%
\boldsymbol{\theta })\frac{1}{\sqrt{n}}\overline{\mathbf{\lambda }}%
_{n}+o_{P}(1)& =\boldsymbol{0}_{p}\mathbf{,} \\
\boldsymbol{G}^{T}(\boldsymbol{\theta })\sqrt{n}(\widetilde{\boldsymbol{%
\theta }}_{rc}-\boldsymbol{\theta })+o_{P}(1)& =\boldsymbol{0}_{p}\mathbf{.}
\end{align*}%
In matrix notation it may be re-expressed as%
\begin{equation*}
\left( 
\begin{array}{cc}
\boldsymbol{H}(\boldsymbol{\theta }) & -\boldsymbol{G}(\boldsymbol{\theta })
\\ 
-\boldsymbol{G}^{T}(\boldsymbol{\theta }) & \boldsymbol{0}_{r\times r}%
\end{array}%
\right) \left( 
\begin{array}{c}
\sqrt{n}(\widetilde{\boldsymbol{\theta }}_{rc}-\boldsymbol{\theta }) \\ 
n^{-1/2}\overline{\mathbf{\lambda }}_{n}%
\end{array}%
\right) =\left( 
\begin{array}{c}
\frac{1}{\sqrt{n}}\dsum\limits_{i=1}^{n}\frac{\partial }{\partial 
\boldsymbol{\theta }}c\ell (\boldsymbol{\theta }\mathbf{,}\boldsymbol{y}_{i}%
\mathbf{)} \\ 
\boldsymbol{0}_{r}%
\end{array}%
\right) +o_{P}(1).
\end{equation*}%
Then 
\begin{equation}
\left( 
\begin{array}{c}
\sqrt{n}(\widetilde{\boldsymbol{\theta }}_{rc}-\boldsymbol{\theta }) \\ 
n^{-1/2}\overline{\mathbf{\lambda }}_{n}%
\end{array}%
\right) =\left( 
\begin{array}{cc}
\boldsymbol{P}(\boldsymbol{\theta }) & \boldsymbol{Q}(\boldsymbol{\theta })
\\ 
\boldsymbol{Q}^{T}(\boldsymbol{\theta }) & \boldsymbol{R}(\boldsymbol{\theta 
})%
\end{array}%
\right) \left( 
\begin{array}{c}
\frac{1}{\sqrt{n}}\dsum\limits_{i=1}^{n}\frac{\partial }{\partial 
\boldsymbol{\theta }}c\ell (\boldsymbol{\theta }\mathbf{,}\boldsymbol{y}_{i}%
\mathbf{)} \\ 
\boldsymbol{0}_{r}%
\end{array}%
\right) +o_{P}(1),  \label{A3}
\end{equation}%
where%
\begin{equation*}
\left( 
\begin{array}{cc}
\boldsymbol{P}(\boldsymbol{\theta }) & \boldsymbol{Q}(\boldsymbol{\theta })
\\ 
\boldsymbol{Q}^{T}(\boldsymbol{\theta }) & \boldsymbol{R}\mathbf{(}%
\boldsymbol{\theta })%
\end{array}%
\right) =\left( 
\begin{array}{cc}
\boldsymbol{H}(\boldsymbol{\theta }) & -\boldsymbol{G}(\boldsymbol{\theta })
\\ 
-\boldsymbol{G}^{T}(\boldsymbol{\theta }) & \boldsymbol{0}_{r\times r}%
\end{array}%
\right) ^{-1}.
\end{equation*}%
This last equation implies (cf. Sen and Singer, 1993, p. 243, eq. (5.6.24)),%
\begin{equation*}
\boldsymbol{P}(\boldsymbol{\theta })=\boldsymbol{H}^{-1}(\boldsymbol{\theta }%
)\left( \boldsymbol{I}_{p}-\boldsymbol{G}(\boldsymbol{\theta })\left( 
\boldsymbol{G}^{T}\mathbf{(}\boldsymbol{\theta }\mathbf{)}\boldsymbol{H}%
^{-1}(\boldsymbol{\theta })\boldsymbol{G}\mathbf{(}\boldsymbol{\theta }%
\mathbf{)}\right) ^{-1}\boldsymbol{G}^{T}(\boldsymbol{\theta })\boldsymbol{H}%
^{-1}(\boldsymbol{\theta })\right) ,
\end{equation*}%
\begin{equation*}
\boldsymbol{Q}(\boldsymbol{\theta })=-\boldsymbol{H}^{-1}(\boldsymbol{\theta 
})\boldsymbol{G}(\boldsymbol{\theta })\left( \boldsymbol{G}^{T}\mathbf{(}%
\boldsymbol{\theta }\mathbf{)}\boldsymbol{H}^{-1}(\boldsymbol{\theta })%
\boldsymbol{G}\mathbf{(}\boldsymbol{\theta }\mathbf{)}\right) ^{-1},
\end{equation*}%
\begin{equation*}
\boldsymbol{R}(\boldsymbol{\theta })=-\left( \boldsymbol{G}^{T}\mathbf{(}%
\boldsymbol{\theta }\mathbf{)}\boldsymbol{H}^{-1}(\boldsymbol{\theta })%
\boldsymbol{G}\mathbf{(}\boldsymbol{\theta }\mathbf{)}\right) ^{-1}.
\end{equation*}%
Based on the central limit theorem (Theorem 3.3.1 of Sen and Singer, 1993,
p. 107) and the Cram\'{e}r-Wold theorem (Theorem 3.2.4 of Sen and Singer,
1993, p. 106) it is obtained%
\begin{equation*}
\frac{1}{\sqrt{n}}\dsum\limits_{i=1}^{n}\frac{\partial }{\partial 
\boldsymbol{\theta }}c\ell (\boldsymbol{\theta }\mathbf{,}\boldsymbol{y}_{i}%
\mathbf{)}\overset{\mathcal{L}}{\underset{n\rightarrow \infty }{%
\longrightarrow }}\mathcal{N}\left( \boldsymbol{0}_{p},Var_{\boldsymbol{%
\theta }}[\boldsymbol{u}(\boldsymbol{\theta }\mathbf{,}\boldsymbol{Y}\mathbf{%
)}]\right) .
\end{equation*}%
with $Var_{\boldsymbol{\theta }}[\boldsymbol{u}(\boldsymbol{\theta }\mathbf{,%
}\boldsymbol{Y}\mathbf{)}]=\boldsymbol{J}\mathbf{(}\boldsymbol{\theta }%
\mathbf{)}$. Then, it follows from (\ref{A3}) that%
\begin{equation*}
\left( 
\begin{array}{c}
\sqrt{n}(\widetilde{\boldsymbol{\theta }}_{rc}-\boldsymbol{\theta }) \\ 
n^{-1/2}\overline{\mathbf{\lambda }}_{n}%
\end{array}%
\right) \overset{\mathcal{L}}{\underset{n\rightarrow \infty }{%
\longrightarrow }}\mathcal{N}\left( \boldsymbol{0},\boldsymbol{\Sigma }%
\right) ,
\end{equation*}%
with%
\begin{equation*}
\boldsymbol{\Sigma }=\left( 
\begin{array}{cc}
\boldsymbol{P}(\boldsymbol{\theta }) & \boldsymbol{Q}(\boldsymbol{\theta })
\\ 
\boldsymbol{Q}^{T}(\boldsymbol{\theta }) & \boldsymbol{R}\mathbf{(}%
\boldsymbol{\theta })%
\end{array}%
\right) \left( 
\begin{array}{cc}
\boldsymbol{J}(\boldsymbol{\theta }) & \boldsymbol{0}_{p\times r} \\ 
\boldsymbol{0}_{r\times p} & \boldsymbol{0}_{r\times r}%
\end{array}%
\right) \left( 
\begin{array}{cc}
\boldsymbol{P}^{T}(\boldsymbol{\theta }) & \boldsymbol{Q}^{T}(\boldsymbol{%
\theta }) \\ 
\boldsymbol{Q}(\boldsymbol{\theta }) & \boldsymbol{R}^{T}\mathbf{(}%
\boldsymbol{\theta })%
\end{array}%
\right) ,
\end{equation*}%
or%
\begin{equation*}
\boldsymbol{\Sigma }=\left( 
\begin{array}{cc}
\boldsymbol{P}(\boldsymbol{\theta })\boldsymbol{J}(\boldsymbol{\theta })%
\boldsymbol{P}^{T}(\boldsymbol{\theta }) & \boldsymbol{P}(\boldsymbol{\theta 
})\boldsymbol{J}(\boldsymbol{\theta })\boldsymbol{Q}^{T}(\boldsymbol{\theta }%
) \\ 
\boldsymbol{Q}^{T}(\boldsymbol{\theta })\boldsymbol{J}(\boldsymbol{\theta })%
\boldsymbol{P}^{T}(\boldsymbol{\theta }) & \boldsymbol{Q}^{T}\mathbf{(}%
\boldsymbol{\theta })\boldsymbol{J}(\boldsymbol{\theta })\boldsymbol{Q}^{T}(%
\boldsymbol{\theta })%
\end{array}%
\right) .
\end{equation*}%
Therefore,%
\begin{equation*}
\sqrt{n}(\widetilde{\boldsymbol{\theta }}_{rc}-\boldsymbol{\theta })\overset{%
\mathcal{L}}{\underset{n\rightarrow \infty }{\longrightarrow }}\mathcal{N}%
\left( \boldsymbol{0}_{p},\boldsymbol{P}(\boldsymbol{\theta })\boldsymbol{J}(%
\boldsymbol{\theta })\boldsymbol{P}^{T}(\boldsymbol{\theta })\right) ,
\end{equation*}%
with%
\begin{align*}
\boldsymbol{P}(\boldsymbol{\theta })& =\boldsymbol{H}^{-1}(\boldsymbol{%
\theta })\left( \boldsymbol{I}_{p}-\boldsymbol{G}(\boldsymbol{\theta }%
)\left( \boldsymbol{G}^{T}\mathbf{(}\boldsymbol{\theta }\mathbf{)}%
\boldsymbol{H}^{-1}(\boldsymbol{\theta })\boldsymbol{G}\mathbf{(}\boldsymbol{%
\theta }\mathbf{)}\right) ^{-1}\boldsymbol{G}^{T}(\boldsymbol{\theta })%
\boldsymbol{H}^{-1}(\boldsymbol{\theta })\right) \\
& =\boldsymbol{H}^{-1}(\boldsymbol{\theta })-\boldsymbol{H}^{-1}(\boldsymbol{%
\theta })\boldsymbol{G}(\boldsymbol{\theta })\left( \boldsymbol{G}^{T}%
\mathbf{(}\boldsymbol{\theta }\mathbf{)}\boldsymbol{H}^{-1}(\boldsymbol{%
\theta })\boldsymbol{G}\mathbf{(}\boldsymbol{\theta }\mathbf{)}\right) ^{-1}%
\boldsymbol{G}^{T}(\boldsymbol{\theta })\boldsymbol{H}^{-1}(\boldsymbol{%
\theta }) \\
& =\boldsymbol{H}^{-1}(\boldsymbol{\theta })+\boldsymbol{Q}(\boldsymbol{%
\theta }\mathbf{)}\boldsymbol{G}^{T}\mathbf{(}\boldsymbol{\theta }\mathbf{)}%
\boldsymbol{H}^{-1}(\boldsymbol{\theta }),
\end{align*}%
and the proof of the lemma is now completed.

\subsection{Proof of Lemma \protect\ref{lemma2} \label{ApB}}

Based on equation (\ref{A3}), above, 
\begin{equation}
\sqrt{n}(\widetilde{\boldsymbol{\theta }}_{rc}-\boldsymbol{\theta })=%
\boldsymbol{P}(\boldsymbol{\theta })\frac{1}{\sqrt{n}}\dsum\limits_{i=1}^{n}%
\frac{\partial }{\partial \boldsymbol{\theta }}c\ell (\boldsymbol{\theta }%
\mathbf{,}\boldsymbol{y}_{i}\mathbf{)}+o_{P}(1).  \label{A4}
\end{equation}%
The Taylor series expansion (\ref{A0}) gives that%
\begin{equation*}
\boldsymbol{0}=\left. \frac{1}{\sqrt{n}}\dsum\limits_{i=1}^{n}\frac{\partial 
}{\partial \boldsymbol{\theta }}c\ell (\boldsymbol{\theta }\mathbf{,}%
\boldsymbol{y}_{i}\mathbf{)}\right\vert _{\boldsymbol{\theta }=\widehat{%
\boldsymbol{\theta }}_{c}}=\frac{1}{\sqrt{n}}\dsum\limits_{i=1}^{n}\frac{%
\partial }{\partial \boldsymbol{\theta }}c\ell (\boldsymbol{\theta }\mathbf{,%
}\boldsymbol{y}_{i}\mathbf{)+}\frac{1}{n}\dsum\limits_{i=1}^{n}\left. \frac{%
\partial ^{2}}{\partial \boldsymbol{\theta }\mathbf{\partial }\boldsymbol{%
\theta }^{T}}c\ell (\boldsymbol{\theta }\mathbf{,}\boldsymbol{y}_{i}\mathbf{)%
}\right\vert _{\boldsymbol{\theta }=\boldsymbol{\theta }_{n}^{\ast }}\sqrt{n}%
(\widehat{\boldsymbol{\theta }}_{c}-\boldsymbol{\theta }),
\end{equation*}%
or%
\begin{equation*}
\frac{1}{\sqrt{n}}\dsum\limits_{i=1}^{n}\frac{\partial }{\partial 
\boldsymbol{\theta }}c\ell (\boldsymbol{\theta }\mathbf{,}\boldsymbol{y}_{i}%
\mathbf{)}=\mathbf{-}\frac{1}{n}\dsum\limits_{i=1}^{n}\left. \frac{\partial
^{2}}{\partial \boldsymbol{\theta }\mathbf{\partial }\boldsymbol{\theta }^{T}%
}c\ell (\boldsymbol{\theta }\mathbf{,}\boldsymbol{y}_{i}\mathbf{)}%
\right\vert _{\boldsymbol{\theta }=\boldsymbol{\theta }_{n}^{\ast }}\sqrt{n}(%
\widehat{\boldsymbol{\theta }}_{c}-\boldsymbol{\theta }),
\end{equation*}%
where $\boldsymbol{\theta }_{n}^{\ast }$ belongs to the line segment joining 
$\boldsymbol{\theta }$ and $\widehat{\boldsymbol{\theta }}_{c}$. Taking into
account Theorem 2.3.6 of Sen and Singer (1993, p. 61), 
\begin{equation*}
\frac{1}{n}\dsum\limits_{i=1}^{n}\left. \frac{\partial ^{2}}{\partial 
\boldsymbol{\theta }\mathbf{\partial }\boldsymbol{\theta }^{T}}c\ell (%
\boldsymbol{\theta }\mathbf{,}\boldsymbol{y}_{i}\mathbf{)}\right\vert _{%
\boldsymbol{\theta }=\boldsymbol{\theta }_{n}^{\ast }}\overset{P}{\underset{%
n\rightarrow \infty }{\longrightarrow }}-\boldsymbol{H}(\boldsymbol{\theta }%
),
\end{equation*}%
and the above two equations lead%
\begin{equation}
\frac{1}{\sqrt{n}}\dsum\limits_{i=1}^{n}\frac{\partial }{\partial 
\boldsymbol{\theta }}c\ell (\boldsymbol{\theta }\mathbf{,}\boldsymbol{y}_{i}%
\mathbf{)}=\boldsymbol{H}(\boldsymbol{\theta })\sqrt{n}(\widehat{\boldsymbol{%
\theta }}_{c}-\boldsymbol{\theta })+o_{P}(1).  \label{A5}
\end{equation}%
Equations (\ref{A4}), (\ref{A5}) and the fact that $\boldsymbol{P}(%
\boldsymbol{\theta }\mathbf{)}=\boldsymbol{H}^{-1}(\boldsymbol{\theta })+%
\boldsymbol{Q}(\boldsymbol{\theta }\mathbf{)}\boldsymbol{G}^{T}\mathbf{(}%
\boldsymbol{\theta }\mathbf{)}\boldsymbol{H}^{-1}(\boldsymbol{\theta })$
give that%
\begin{align*}
\sqrt{n}(\widetilde{\boldsymbol{\theta }}_{rc}-\boldsymbol{\theta })& =%
\boldsymbol{P}(\boldsymbol{\theta })\frac{1}{\sqrt{n}}\dsum\limits_{i=1}^{n}%
\frac{\partial }{\partial \boldsymbol{\theta }}c\ell (\boldsymbol{\theta }%
\mathbf{,}\boldsymbol{y}_{i}\mathbf{)}+o_{P}(1) \\
& =\boldsymbol{P}(\boldsymbol{\theta })\boldsymbol{H}(\boldsymbol{\theta })%
\sqrt{n}(\widehat{\boldsymbol{\theta }}_{c}-\boldsymbol{\theta })+o_{P}(1) \\
& =\left( \boldsymbol{H}^{-1}(\boldsymbol{\theta })+\boldsymbol{Q}(%
\boldsymbol{\theta }\mathbf{)}\boldsymbol{G}^{T}\mathbf{(}\boldsymbol{\theta 
}\mathbf{)}\boldsymbol{H}^{-1}(\boldsymbol{\theta })\right) \boldsymbol{H}(%
\boldsymbol{\theta })\sqrt{n}(\widehat{\boldsymbol{\theta }}_{c}-\boldsymbol{%
\theta })+o_{P}(1),
\end{align*}%
which completes the proof of the lemma.

\subsection{Proof of Theorem \protect\ref{Theorem4} \label{ApC}}

A second order Taylor expansion of $D_{\phi }(\widehat{\boldsymbol{\theta }}%
_{c},\widetilde{\boldsymbol{\theta }}_{rc})$, considered as a function of $%
\widehat{\boldsymbol{\theta }}_{c}$, around $\widetilde{\boldsymbol{\theta }}%
_{rc}$, gives%
\begin{align*}
D_{\phi }(\widehat{\boldsymbol{\theta }}_{c},\widetilde{\boldsymbol{\theta }}%
_{rc})& =D_{\phi }(\widehat{\boldsymbol{\theta }}_{c},\widetilde{\boldsymbol{%
\theta }}_{rc})+\left. \frac{\partial }{\partial \boldsymbol{\theta }}%
D_{\phi }(\boldsymbol{\theta }\mathbf{,}\widetilde{\boldsymbol{\theta }}%
_{rc})\right\vert _{\boldsymbol{\theta }=\widetilde{\boldsymbol{\theta }}%
_{rc}}(\widehat{\boldsymbol{\theta }}_{c}-\widetilde{\boldsymbol{\theta }}%
_{rc}) \\
& +\frac{1}{2}(\widehat{\boldsymbol{\theta }}_{c}-\widetilde{\boldsymbol{%
\theta }}_{rc})^{T}\left. \frac{\partial ^{2}}{\partial \boldsymbol{\theta }%
\mathbf{\partial }\boldsymbol{\theta }^{T}}D_{\phi }(\boldsymbol{\theta }%
\mathbf{,}\widetilde{\boldsymbol{\theta }}_{rc})\right\vert _{\theta =%
\widetilde{\boldsymbol{\theta }}_{rc}}(\widehat{\boldsymbol{\theta }}_{c}-%
\widetilde{\boldsymbol{\theta }}_{rc})+o(\Vert \widehat{\boldsymbol{\theta }}%
_{c}-\widetilde{\boldsymbol{\theta }}_{rc}\Vert ^{2}).
\end{align*}%
Based on Pardo (2006, p. 411-412), we obtain $D_{\phi }(\widetilde{%
\boldsymbol{\theta }}_{rc}\mathbf{,}\widetilde{\boldsymbol{\theta }}_{rc})=0$%
, $\left. \frac{\partial }{\partial \boldsymbol{\theta }}D_{\phi }(%
\boldsymbol{\theta }\mathbf{,}\widetilde{\boldsymbol{\theta }}%
_{rc})\right\vert _{\boldsymbol{\theta }=\widetilde{\boldsymbol{\theta }}%
_{rc}}=\boldsymbol{0}_{p}$ and $\left. \frac{\partial ^{2}}{\partial 
\boldsymbol{\theta }\mathbf{\partial }\boldsymbol{\theta }^{T}}D_{\phi }(%
\boldsymbol{\theta }\mathbf{,}\widetilde{\boldsymbol{\theta }}%
_{rc})\right\vert _{\boldsymbol{\theta }=\widetilde{\boldsymbol{\theta }}%
_{rc}}=\phi ^{\prime \prime }(1)\boldsymbol{J}(\widetilde{\boldsymbol{\theta 
}}_{rc})$. Then, the above equation leads%
\begin{equation*}
\frac{2n}{\phi ^{\prime \prime }(1)}D_{\phi }(\widehat{\boldsymbol{\theta }}%
_{c},\widetilde{\boldsymbol{\theta }}_{rc})=\sqrt{n}(\widehat{\boldsymbol{%
\theta }}_{c}-\widetilde{\boldsymbol{\theta }}_{rc})^{T}\boldsymbol{J}(%
\widetilde{\boldsymbol{\theta }}_{rc})\sqrt{n}(\widehat{\boldsymbol{\theta }}%
_{c}-\widetilde{\boldsymbol{\theta }}_{rc})+no(\Vert \widehat{\boldsymbol{%
\theta }}_{c}-\widetilde{\boldsymbol{\theta }}_{rc}\Vert ^{2}),
\end{equation*}%
or%
\begin{equation}
T_{\phi ,n}(\widehat{\boldsymbol{\theta }}_{c}\mathbf{,}\widetilde{%
\boldsymbol{\theta }}_{rc})=\sqrt{n}(\widehat{\boldsymbol{\theta }}_{c}-%
\widetilde{\boldsymbol{\theta }}_{rc})^{T}\boldsymbol{J}(\widetilde{%
\boldsymbol{\theta }}_{rc})\sqrt{n}(\widehat{\boldsymbol{\theta }}_{c}-%
\widetilde{\boldsymbol{\theta }}_{rc})+no(\Vert \widehat{\boldsymbol{\theta }%
}_{c}-\widetilde{\boldsymbol{\theta }}_{rc}\Vert ^{2}).  \label{A6}
\end{equation}%
On the other hand (cf., Pardo, 2006, p. 63), 
\begin{equation*}
no(||\widehat{\boldsymbol{\theta }}_{c}-\widetilde{\boldsymbol{\theta }}%
_{rc}||^{2})\leq no(\Vert \widehat{\boldsymbol{\theta }}_{c}-\boldsymbol{%
\theta }\Vert ^{2})+no(||\widetilde{\boldsymbol{\theta }}_{rc}-\boldsymbol{%
\theta }||^{2}),
\end{equation*}%
and $no(\Vert \widehat{\boldsymbol{\theta }}_{c}-\boldsymbol{\theta }\Vert
^{2})=o_{P}(1)$, $no(||\widetilde{\boldsymbol{\theta }}_{rc}-\boldsymbol{%
\theta }||^{2})=o_{P}(1)$. Therefore, $o(||\widehat{\boldsymbol{\theta }}%
_{c}-\widetilde{\boldsymbol{\theta }}_{rc}||^{2})=o_{P}(1)$. To apply the
Slutsky's theorem, it remains to obtain the asymptotic distribution of the
quantity 
\begin{equation*}
\sqrt{n}(\widehat{\boldsymbol{\theta }}_{c}-\widetilde{\boldsymbol{\theta }}%
_{rc})^{T}\boldsymbol{J}(\widetilde{\boldsymbol{\theta }}_{rc})\sqrt{n}(%
\widehat{\boldsymbol{\theta }}_{c}-\widetilde{\boldsymbol{\theta }}_{rc}).
\end{equation*}%
From Lemma \ref{lemma2} it is immediately obtained that%
\begin{equation*}
\sqrt{n}(\widehat{\boldsymbol{\theta }}_{c}-\widetilde{\boldsymbol{\theta }}%
_{rc})=\boldsymbol{Q}(\boldsymbol{\theta }\mathbf{)}\boldsymbol{G}^{T}%
\mathbf{(}\boldsymbol{\theta }\mathbf{)}\sqrt{n}(\widehat{\boldsymbol{\theta 
}}_{c}-\boldsymbol{\theta })+o_{P}(1).
\end{equation*}%
On the other hand, we know that%
\begin{equation*}
\sqrt{n}(\widehat{\boldsymbol{\theta }}_{c}-\boldsymbol{\theta })\overset{%
\mathcal{L}}{\underset{n\rightarrow \infty }{\longrightarrow }}\mathcal{N}%
\left( \boldsymbol{0}_{p},\boldsymbol{G}_{\ast }^{-1}(\boldsymbol{\theta }%
\mathbf{)}\right) .
\end{equation*}%
Therefore,%
\begin{equation*}
\sqrt{n}(\widehat{\boldsymbol{\theta }}_{c}-\widetilde{\boldsymbol{\theta }}%
_{rc})\overset{\mathcal{L}}{\underset{n\rightarrow \infty }{\longrightarrow }%
}\mathcal{N}\left( \boldsymbol{0}_{p},\boldsymbol{G}\mathbf{(}\boldsymbol{%
\theta }\mathbf{)}\boldsymbol{Q}^{T}\mathbf{(}\boldsymbol{\theta }\mathbf{%
\mathbf{)}}\boldsymbol{G}_{\ast }^{-1}(\boldsymbol{\theta }\mathbf{)}%
\boldsymbol{Q}(\boldsymbol{\theta }\mathbf{)}\boldsymbol{G}^{T}\mathbf{(}%
\boldsymbol{\theta }\mathbf{)}\right) ,
\end{equation*}%
and taking into account (\ref{A6}) and Corollary 2.1 of Dik and de Gunst
(1985), $T_{\phi ,n}(f_{\widehat{\boldsymbol{\theta }}_{c}},f_{\widetilde{%
\boldsymbol{\theta }}_{rc}})$ converge in law to the random variable $%
\sum_{i=1}^{k}\beta _{i}Z_{i}^{2}$, where $\beta _{i}$, $i=1,...,k$, are the
eigenvalues of the matrix $\boldsymbol{J}(\boldsymbol{\theta })\boldsymbol{G}%
\mathbf{(}\boldsymbol{\theta }\mathbf{)}\boldsymbol{Q}^{T}(\boldsymbol{%
\theta })\boldsymbol{G}_{\ast }^{-1}(\boldsymbol{\theta }\mathbf{)}%
\boldsymbol{Q}(\boldsymbol{\theta })\boldsymbol{G}(\boldsymbol{\theta })^{T}$
and%
\begin{equation*}
k=\mathrm{rank}\left( \boldsymbol{G}\mathbf{(}\boldsymbol{\theta }\mathbf{)}%
\boldsymbol{Q}^{T}\mathbf{(}\boldsymbol{\theta }\mathbf{\mathbf{)}}%
\boldsymbol{G}_{\ast }^{-1}(\boldsymbol{\theta }\mathbf{)}\boldsymbol{Q}(%
\boldsymbol{\theta }\mathbf{)}\boldsymbol{G}^{T}\mathbf{(}\boldsymbol{\theta 
}\mathbf{)}\boldsymbol{J}(\boldsymbol{\theta })\boldsymbol{G}\mathbf{(}%
\boldsymbol{\theta }\mathbf{)}\boldsymbol{Q}^{T}\mathbf{(}\boldsymbol{\theta 
}\mathbf{\mathbf{)}}\boldsymbol{G}_{\ast }^{-1}(\boldsymbol{\theta }\mathbf{)%
}\boldsymbol{Q}(\boldsymbol{\theta }\mathbf{)}\boldsymbol{G}^{T}\mathbf{(}%
\boldsymbol{\theta }\mathbf{)}\right) .
\end{equation*}

\noindent \noindent \textbf{Acknowledgments}. The third author wants to
cordialy thank Prof. Alex de Leon, from the University of Calgary, for
fruitful discussions on composite likelihood methods, some years ago. This
research is partially supported by Grants MTM2012-33740 from Ministerio de
Economia y Competitividad (Spain).

\end{document}